\newmdenv[leftmargin=\dimexpr-0.4em, innerleftmargin=0.5em,
rightmargin=\dimexpr-0.4em, innerrightmargin=0.5em,
linewidth=2pt,linecolor=red, topline=false, bottomline=false,
innertopmargin=0pt,innerbottommargin=0pt,skipbelow=0pt,skipabove=0pt,%
]{notex}
\newenvironment{note}%
{\vskip\dimexpr\dp\strutbox-\prevdepth\relax\notex\strut\ignorespaces}%
{\xdef\notetpd{\the\prevdepth}\endnotex\vskip-\notetpd\relax}
\let\oldtodo\todo
\DeclareDocumentCommand{\todo}{ O{} +g +d<> }{%
		\setlength{\marginparwidth}{1.5cm}%
	\IfNoValueTF{#2}{\relax}{%
		\oldtodo[caption={#2},size=\scriptsize,#1]{\renewcommand{\baselinestretch}{0.8}\selectfont\sffamily#2\par}%
	}%
	\IfNoValueTF{#3}{\relax}{%
		\IfNoValueTF{#2}{
			\begin{note}%
				\begin{internallinenumbers}%
					\indent%
					#3%
				\end{internallinenumbers}%
			\end{note}%
		}{
			\vspace{-0\baselineskip}%
			\begin{note}%
				\begin{internallinenumbers}%
					\indent%
					#3%
				\end{internallinenumbers}%
			\end{note}%
		}%
	}%
}%
\newcommand{\hlc}[2][yellow]{{%
		\colorlet{foo}{#1}%
		\sethlcolor{foo}\hl{#2}}%
}
\newcommand{\removetodo}[2]{\todo[color=pink]{\textbf{delete:} ``#1'' #2}\hlc[pink]{#1}}
\newcommand{\inserttodo}[1]{\todo[color=green!40]{\textbf{insert:} #1}}
\newcommand{\hltodoy}[2]{\todo[color=yellow!40]{#2}\hl{#1} }
\newcommand{\hltodoc}[3]{\todo[color=#3!40]{#2}\hlc[#3]{#1} }
\newcommand{\hltodo}[2]{\todo[color=orange!40]{#2}\hlc[orange!40]{#1} }
\newcommand{\replacetodo}[2]{\todo[color=pink!40]{\textbf{replace with:}``#2'' }\hl{#1} }
\newcommand{\todol}[1]{{%
		\let\marginpar\marginnote
		\reversemarginpar
		\renewcommand{\baselinestretch}{0.8}%
		\todo{#1}}}
\newcommand{\inserttodol}[1]{{%
		\let\marginpar\marginnote
		\reversemarginpar
		\renewcommand{\baselinestretch}{0.8}%
		\inserttodo{#1}}}
\newcommand{\removetodol}[2]{{%
		\let\marginpar\marginnote
		\reversemarginpar
		\renewcommand{\baselinestretch}{0.8}%
		\removetodo{#1}{#2}}}
\newcommand{\hltodol}[2]{{%
		\let\marginpar\marginnote
		\reversemarginpar
		\renewcommand{\baselinestretch}{0.8}%
		\hltodo{#1}{#2}}}
\newcommand{\replacetodol}[2]{{%
		\let\marginpar\marginnote
		\reversemarginpar
		\renewcommand{\baselinestretch}{0.8}%
		\replacetodo{#1}{#2}}}
\newcommand{\hltodoyl}[2]{{%
		\let\marginpar\marginnote
		\reversemarginpar
		\renewcommand{\baselinestretch}{0.8}%
		\hltodoy{#1}{#2}}}
\newcommand{\hltodocl}[3]{{		\let\marginpar\marginnote
		\reversemarginpar
		\renewcommand{\baselinestretch}{0.8}%
		\hltodoc{#1}{#2}{#3}}}
\newtheorem{theorem}{Theorem}[section]
\newtheorem{lemma}[theorem]{Lemma}
\newtheorem{proposition}[theorem]{Proposition}
\newtheorem{corollary}[theorem]{Corollary}
\newtheorem{definition}{Definition}[section]
\def\bkE{{\rm I\kern-.17em E}}
\def\bk1{{\rm 1\kern-.17em l}}
\def\bkD{{\rm I\kern-.17em D}}
\def\bkR{{\rm I\kern-.17em R}}
\def\bkP{{\rm I\kern-.17em P}}
\def\bkZ{{\bf{Z}}}
\def\bkE{{\rm I\kern-.17em E}}
\def\bk1{{\rm 1\kern-.17em l}}
\def\bkD{{\rm I\kern-.17em D}}
\def\bkR{{\rm I\kern-.17em R}}
\def\bkP{{\rm I\kern-.17em P}}
\newcommand{\pushright}[1]{\ifmeasuring@#1\else\omit\hfill$\displaystyle#1$\fi\ignorespaces}
\newcommand{\pushleft}[1]{\ifmeasuring@#1\else\omit$\displaystyle#1$\hfill\fi\ignorespaces}
\def\bkZ{{\bf{Z}}}
\def\b12{(\beta_1,\beta_2)}
\newenvironment{example}{{\noindent \bf Example}}{\hfill $\square$\hspace{-4.5pt}\vspace{6pt}}
\newcounter{example}
\renewcommand{\theexample}{\thesection.\arabic{example}}
\newcounter{remark}
\renewcommand{\theremark}{\thesection.\arabic{remark}}
\def\Xscr{\mathcal{X}}
\def\Yscr{\mathcal{Y}}
\newlength{\noteWidth}
\long\def\notes#1{\ifinner
{\tiny #1}
\else
\marginpar{\parbox[t]{\noteWidth}{\raggedright\tiny #1}}
\fi\typeout{#1}}
 \def\notes#1{\typeout{read notes: #1}} 
\newcommand{\ut}{\mathscr{U}}
\newcommand{\ie}{i.e.\@\xspace} 
\def\OPT{{\rm OPT}}
\def\Pbb{{\mathbb{P}}}
\def\Nbb{{\mathbb{N}}}
\def\Ibf{{\bf I}}
\def\spose#1{\hbox to 0pt{#1\hss}}
\def\text #1{\hbox{\quad#1\quad}}
\def\xhat{{\hat x}}
\def\nthinsp{\mskip -2   mu}
\def\superstar{^{\raise 0.5pt\hbox{$\nthinsp *$}}}
\def\SUPERSTAR{^{\raise 0.5pt\hbox{$*$}}}
\def\lamstarT {\lambda^{\raise 0.5pt\hbox{$\nthinsp *$}T}}
\def\Iscr{{\cal I}}
\def\Mscr{{\cal M}}
\def\Pscr{{\cal P}}
\def\Wscr{{\cal W}}
\def\Mscr{{\cal M}}
\def\Rscr{{\cal R}}
\def\Gscr{{\cal G}}
\def\Cscr{{\cal C}}
\def\Xscr{{\cal X}}
\def\Yscr{{\cal Y}}
\def\xhat{\skew{2.8}\widehat x}
\def\Xhat{\widehat{\mathcal{X}}}
\def\supp{{\rm supp}}
\let\forallnew\forall
\renewcommand{\forall}{\forallnew\ }
\let\forall\forallnew
		\def\bkE{{\rm I\kern-.17em E}}
		\def\bk1{{\rm 1\kern-.17em l}}
		\def\bkD{{\rm I\kern-.17em D}}
		\def\bkR{{\rm I\kern-.17em R}}
		\def\bkP{{\rm I\kern-.17em P}}
		\def\bkY{{\bf \kern-.17em Y}}
		\def\bkZ{{\bf \kern-.17em Z}}
		\def\bkC{{\bf  \kern-.17em C}}
		\def\bsp{\begin{split}}
		\def\beq{\begin{eqnarray}}
		\def\bal{\begin{align*}}
		\def\bc{\begin{center}}
		\def\be{\begin{enumerate}}
		\def\bi{\begin{itemize}}
		\def\bs{\begin{small}}
		\def\bS{\begin{slide}}
		\def\ec{\end{center}}
		\def\ee{\end{enumerate}}
		\def\ei{\end{itemize}}
		\def\es{\end{small}}
		\def\eS{\end{slide}}
		\def\eeq{\end{eqnarray}}
		\def\eal{\end{align*}}
		\def\esp{\end{split}}
		\def\qed{ \vrule height7.5pt width7.5pt depth0pt}  
	\def\cp2problem#1#2#3#4{\fbox
		 {\begin{tabular*}{0.9\textwidth}
			{@{}l@{\extracolsep{\fill}}l@{\extracolsep{6pt}}l@{\extracolsep{\fill}}c@{}}
				#1 & & $#4 $ 
			\end{tabular*}}}
		\def\bkE{{\rm I\kern-.17em E}}
		\def\bk1{{\rm 1\kern-.17em l}}
		\def\bkD{{\rm I\kern-.17em D}}
		\def\bkR{{\rm I\kern-.17em R}}
		\def\bkP{{\rm I\kern-.17em P}}
		\def\bkZ{{\bf{Z}}}
\newcommand {\beeq}[1]{\begin{equation}\label{#1}}
\newcommand {\eeeq}{\end{equation}}
\newcommand {\bea}{\begin{eqnarray}}
\newcommand {\eea}{\end{eqnarray}}
\def\texitem#1{\par\smallskip\noindent\hangindent 25pt
               \hbox to 25pt {\hss #1 ~}\ignorespaces}
\def\bsp{\begin{split}}
		\def\beq{\begin{eqnarray}}
		\def\bal{\begin{align*}}
		\def\bc{\begin{center}}
		\def\be{\begin{enumerate}}
		\def\bi{\begin{itemize}}
		\def\bs{\begin{small}}
		\def\bS{\begin{slide}}
		\def\ec{\end{center}}
		\def\ee{\end{enumerate}}
		\def\ei{\end{itemize}}
		\def\es{\end{small}}
		\def\eS{\end{slide}}
		\def\eeq{\end{eqnarray}}
		\def\eal{\end{align*}}
		\def\esp{\end{split}}
		\def\qed{ \vrule height7.5pt width7.5pt depth0pt}  
\newenvironment{proof}[1][]{{\noindent \emph {Proof} #1: }}{\hfill \qed \vspace{3pt}\\ }
\def\Cscr{{\cal C}}
\newlength\mylen
\newlist{mycases}{enumerate}{1}
\setlist[mycases,1]{label=\textbf{Case~\arabic*.},
  labelwidth=\dimexpr-\mylen-\labelsep\relax,leftmargin=0pt,align=right}
\newlist{myclass}{enumerate}{1}
\setlist[myclass,1]{label=\textbf{Class~\arabic*.},
  labelwidth=\dimexpr-\mylen-\labelsep\relax,leftmargin=0pt,align=right}
\tikzstyle{int}=[draw, fill=white!20, minimum size=2em]
\tikzstyle{init} = [pin edge={to-,thin,black}]
\newcounter{eg}[section]
\renewcommand{\theeg}{\arabic{section}.\arabic{eg}}
\newenvironment{examp}[1][]{\refstepcounter{eg}
   \textit{Example~\theeg. #1} \rmfamily}{\hfill $\square$   \hspace{-4.5pt} \vspace{6pt}}
\colorlet{red}{black}
\newcommand{\AS}{{\mathcal A}_S }
\newcommand{\AR}{{\mathcal A}_R }
\newcommand{\Ubr}{{\mathcal A}_S^*}
\newcommand{\brr}{\mathscr B}
\newcommand{\Rbb}{\mathbb R}
\newcommand{\Ninfo}{\mathfrak{I}}
\newcommand{\Ut}{\mathscr{U}}
\newcommand{\lpu}{\mathbf{P}(\ut)}
\newcommand{\dpu}{\mathbf{D}(\ut)}
\newcommand{\pie}{\pi_{k}}
\newcommand{\vpie}{\pi_{\varepsilon}}
\newcommand{\ese}{$\varepsilon$-SES}
\newcommand*{\rom}[1]{\expandafter\@slowromancap\romannumeral #1@}
\newcounter{casenum}
\def\BibTeX{{\rm B\kern-.05em{\sc i\kern-.025em b}\kern-.08em
    T\kern-.1667em\lower.7ex\hbox{E}\kern-.125emX}}
\begin{document}
\title{\textbf{Informativeness and Trust in Bayesian Persuasion}
}

\author[1]{Reema Deori}
\author[1]{Ankur A. Kulkarni}

\address[1]{{Center for Systems and Control, Indian Institute of Technology Bombay},
{Powai},
{Mumbai},
	{400076},
	{Maharashtra},
	{India. deori.reema@iitb.ac.in, kulkarni.ankur@iitb.ac.in}}

\begin{abstract}
A persuasion policy successfully persuades an agent to pick a particular action only if the information is designed in a manner that convinces the agent that it is in their best interest to pick that action. Thus, it is natural to ask, what makes the agent trust the persuader's suggestion?
We study a Bayesian persuasion interaction between a sender and a receiver where the sender has access to private information about a source and the receiver attempts to recover this information from messages sent by the sender. The sender crafts these messages in an attempt to maximize its utility which depends on the source symbol and the symbol recovered by the receiver. Our goal is to characterize the min-max equilibrium utility of the sender, called the \textit{Stackelberg game value}, and the amount of true information revealed by the sender during persuasion.
We find that the Stackelberg game value is given by the optimal value of a \textit{linear program} on probability distributions constrained by certain \textit{trust constraints}. These constraints encode that any signal in a persuasion strategy must contain more truth than untruth and thus impose a fundamental bound on the extent of obfuscation a sender can perform in any attempt to persuade the receiver.
We define \textit{informativeness} of the sender as the minimum value of the expected number of symbols truthfully revealed by the sender in any accumulation point of a sequence of $\varepsilon$-equilibrium persuasion strategies, and show that it is given by the optimal value of another linear program. Informativeness gives a fundamental bound on the amount of information the sender must reveal to persuade a receiver. Closed form expressions for the Stackelberg game value and the informativeness are presented for structured utility functions. This work generalizes our previous work~\cite{deori2022information} where the sender and the receiver were constrained to play only deterministic strategies and a similar notion of informativeness was characterized. Comparisons between the previous and current notions are discussed.

\end{abstract}
\maketitle
\section{Introduction}
Consider a social-media influencer who makes money through endorsement offers from brands based on the number of purchases made using the links affiliated to her account. The products she reviews are of varying quality, whereby she faces a dilemma between maximizing sales and maintaining trust.  For example, when she is paid for the promotion of a poor quality product, honesty will lead to low sales. However, frequent misleading reviews will hurt her too, since this action would result in her followers losing trust in her honest reviews as well. A better choice in such scenarios appears to be to maintain some ambiguity: promote the poor quality product as an average quality product by giving average scores to both an average quality product and the poor quality product, thereby partially misleading her followers and also partially winning their trust. In other words, the influencer's optimal policy `lies'\footnote{Pun intended.} in striking a balance between maintaining trust through honesty and maximizing sales through prevarication.

Our aim is to understand how much truth there is in such an influencer's words. We model an interaction between a strategic informed sender and an uniformed receiver. The sender persuades the receiver by committing to a randomized signaling policy crafted with the aim of maximizing its utility. The receiver on the other hand attempts to know the true information of the sender.
Our contribution is a characterization of the sender's  expected utility in a Stackelberg equilibrium~\cite{basar1999dynamic} as a linear program. We also get a linear programming characterization of the \textit{minimum amount of truth} revealed to the receiver in any Stackelberg equilibrium. In the process we uncover a key element of such problems: persuasion must be subject to ``trust constraints'' for it to work. The influencer's optimal policy must reveal some truth.

We studied the same game previously in \cite{deori2022information} but under the restriction that both players play only deterministic strategies. Thus, the sender's objective in this game was to pick an `encoding' strategy that persuades the receiver to recover the majority of the symbols as the sender's preferred choice of symbols. Our study in \cite{deori2022information} showed that every equilibrium strategy is equivalent to a \textit{vertex clique cover} of a suitably defined \textit{strong sender graph}. We then characterized the \textit{informativeness of a sender}, \ie, the minimum amount of information recovered by the receiver in any equilibrium, and demonstrated that it is given by the \textit{vertex clique cover number} of the strong sender graph.
In this paper, we take this thought forward allowing the players to play randomized strategies where obfuscation takes a more subtle and interesting form.

\subsection{Main findings}
Recall the dilemma the influencer had between honesty and prevarication, or equivalently between retaining the trust of her followers and maximizing her revenue. Intuitively, to maintain the trust of its followers, the influencer must always reveal a greater degree of truth than untruth in every review.
Thus, any review, say a ``low'' rating,  is trustworthy only if it is mostly given to low quality products.
Our central contribution lies in formalizing this intuition.

We formulate the above setting as that of Bayesian persuasion and solve for a min-max Stackelberg equilibrium solution in behavioural strategies.
We find that even in simple problem classes, finding the Stackelberg equilibrium value (SGV) of the sender involves a long and complex calculation. Our first result bypasses this difficulty: it shows that the equilibrium expected utility of the sender is characterized by a \textit{linear program} (LP) on probability distributions; these distributions are required to meet certain \textit{trust constraints}. These constraints encode that any signal must contain more truth than untruth and thus impose a fundamental bound on the extent of obfuscation a sender can perform. These constraints make the receiver trust the sender about the truth of some symbols and accordingly makes the rational receiver choose a best response which recovers some symbols as symbols preferred by the sender. We measure the information revealed by the sender by the minimum value of the expected number of symbols truthfully revealed by the sender in any accumulation point of a sequence of $\varepsilon$-Stackelberg equilibrium strategies. This quantity serves as a fundamental bound on the amount of truth revealed in any equilibrium strategy and is a measure of the informativeness of this interaction for the receiver. We call this quantity the \textit{informativeness} of the sender and show that, despite a complex definition, it is also given by another linear program.  A corollary of our analysis is that the sender will opt for full information disclosure at equilibrium \textit{if and only if} there is full  alignment of objectives with the receiver.
Using this result, we prove that whenever the non-negative terms in the utility of the sender are positive and constant, the loss of information is identical in all persuasion strategies. For such utility functions, we prove that the SGV is a constant multiple of the amount of loss of information at equilibrium.

We end our paper by deriving closed form expressions for the SGV and informativeness for three different classes of utility functions. We introduce the \textit{obfuscation graph} of a utility function and characterize the SGV and informativeness exactly for utilities for which this graph is chain, cycle or star. We conclude by comparing the notion of informativeness with the notion of informativeness in~\cite{deori2022information} where the players are allowed to play only deterministic strategies.

Informativeness serves as a measure of information content in a utility function, much like entropy does for a probability distribution. It enjoys some natural properties and has a clean characterization. This work, following up on our previous work~\cite{deori2022information} and similar works in a screening setting~\cite{vora2024shannon,vora2023achievable} furthers our understanding information exchange in multiagent interactions.
\subsection{Related work and organization}
Sequential strategic interactions with incomplete information are covered mainly by \textit{screening games} and \textit{signalling games}~\cite{rasmusen2007games}. The order of play separates one regime from the other wherein the informed player makes the first move in signalling games, unlike the screening regime where the uninformed player leads. The Bayesian persuasion model introduced in~\cite{kamenica2011bayesian} is a popular approach to model persuasion and study \textit{information design} in the signalling regime. We also use this framework in this paper. Bayesian persuasion literature has grown tremendously in the recent past and issues of information, which is our interest, have come to the fore. For example \cite{le2019persuasion} studies persuasion with communication constraints, \cite{rouphael2021strategic} explores a multi-user Bayesian persuasion setting in an information-theoretic framework and \cite{akyol2015strategic} studies strategic information transfer between a transmitter and a receiver in a signalling set-up. Previously, we also studied a signalling set-up in \cite{deori2022information,deori2023zero} but with deterministic strategy space. The screening version of this interaction was studied in \cite{vora2024shannon}. However, a formal understanding of the information exchange in Bayesian persuasion has remained open. We seek to fill this gap in this paper.

The paper is organized as follows. We formulate the problem in Section \ref{sec2} and introduce the LP in Section \ref{sec3}, where we characterize the Stackelberg equilibrium. Section \ref{sec4} characterizes the informativeness while Section \ref{sec5} is dedicated to understanding how SGV and informativeness varies for different class of utility functions using graph theoretic interpretations. Section \ref{sec6} concludes the paper.

\section{Problem formulation}\label{sec2}
 \subsection{Notation}
We use $\OPT(\bullet)$ to denote the optimal value of the optimization problem `$\bullet$' and  $\Pscr(\bullet)$ to denote the set of probability distributions on `$\bullet$'.
 \subsection{Sender-receiver game}
We consider a source alphabet $\Xscr$ of size $q$; each element of $\Xscr$ is called a \textit{symbol}. Let $\Yscr$ be the set of possible \textit{signals} that the sender can assign to the elements of $\Xscr$. We assume $|\Xscr|=|\Yscr|$. The source generates a symbol $X$ uniformly at random with a probability $\Pbb(X = x) =
\frac{1}{q}, \forall x \in \Xscr$. The sender maps $X$ randomly to a signal $Y \in \Yscr$ according to some distribution $\pi \in \Pscr(\Xscr|\Yscr)$. The receiver attempts to recover $X$ from $Y$ using a distribution $\sigma \in \Pscr(\Yscr|\Xscr)$.
If the receiver recovers the source symbol $x\in \Xscr$ as $x'\in \Xscr$ then the sender obtains utility $\ut(x',x)$, where $\ut:\Xscr \times \Xscr \rightarrow \Rbb$. The sender attempts to maximize this utility by choosing $\pi$. Without loss of generality, we assume that $\ut(x,x)=0,\forall x \in \Xscr$.

Let $\AS:=\{\pi| \pi \in \Pscr(\Xscr|\Yscr) \}$ and $\AR:=\{\sigma| \sigma \in \Pscr(\Xscr|\Yscr) \}$ denote the collection of the strategies of the sender and the receiver respectively. For any $\pi \in \AS$, let $$\Yscr(\pi):=\{y\in \Yscr| \exists x \in \Xscr \text{s.t.} \pi(y|x)>0\}$$ be the set of possible signals used by the sender with positive probability when it plays $\pi$. Let $\widehat{X}$ denote the symbol recovered by the receiver. Thus, for a given pair of $\pi \in \AS$ and $\sigma\in \AR,$  the joint distribution of $ X,Y,\widehat{X}$ is given by
\begin{equation}
\Pbb_{\pi,\sigma}(x,y,\xhat) = \frac{1}{q}\pi(y|x)  \sigma(\xhat|y).
\end{equation}

The receiver's goal is to choose $\sigma\in \AR$ to maximize the probability of correct recovery of the symbols generated at the source. This is equivalent to maximizing
\[ \Rscr(\pi,\sigma) := q\Pbb_{\pi,\sigma}(X=\widehat{X})= \sum_{x\in \Xscr,y\in \Yscr} \pi(y|x)\sigma(x|y).\]

The set of strategies of the receiver which maximize $\Rscr(\pi,\sigma)$ is called the \textit{best response set}, $\brr(\pi),$
$$\brr(\pi)=\{\sigma \in \AR(\pi)|\ \sigma \in \arg \max \limits_{\sigma \in \AR(\pi)} \Rscr (\pi,\sigma)\}.$$
Clearly, $\brr(\pi)$ can be also be expressed as
\begin{equation}\label{eq:brr_eq}
    \brr(\pi)=\{\sigma \in \AR|\supp(\sigma(\bullet|y))\equiv \arg \max \limits_{x} \pi(y|x)\}.
\end{equation}

For a $\pi$ and $\sigma \in \brr(\pi), $ let \begin{equation}
\Xhat(\pi,\sigma):=\{\xhat\in \Xscr| \ \exists y \in \Yscr(\pi) \text{s.t.} \sigma(\xhat|y)>0 \},
\end{equation} be the set of symbols which have positive probability of getting recovered when the sender and the receiver play $\pi$ and $\sigma$ respectively. For a $\pi \in \AS$ and an $x\in \Xscr$, we define the support of $\pi(\cdot|x)$ as $$E_x(\pi):=\{y \in \Xscr : \pi(y|x)>0\}.$$ The sender's goal is to pick a $\pi $ which maximizes its expected utility, \ie, to maximize,
\begin{equation}
U(\pi, \sigma):=\sum \limits_{x \in \Xscr} \sum \limits_{ y \in \Yscr(\pi)} \sum \limits_{\hat{x} \in \Xscr}\pi(y|x)\sigma (\hat{x}|y)\ut(\hat{x},x).
\end{equation}
From this point forward, $U(\pi,\sigma)$ will be referred to as the \textit{expected utility} of the sender. For any strategy $\pi \in \AS$, let, \begin{equation}
\begin{split}
\underline{U} (\pi)=\min \limits_{\sigma \in \brr(\pi) } U(\pi,\sigma) \text{and}
\overline{U}(\pi)=\max \limits_{\sigma \in \brr(\pi) } U(\pi,\sigma)
\end{split}
\end{equation}
denote the corresponding \textit{worst case expected utility} (WCEU) and the  \textit{best case expected utility} (BCEU) obtained by the sender, respectively. Let \begin{equation}
 \underline{\brr}(\pi)=\{\sigma \in \brr(\pi)| U(\pi,\sigma)=\min \limits_{\sigma' \in \brr(\pi) } U(\pi,\sigma')\}
\end{equation}
denote the collection of all those best response strategies of the receiver which give the sender the worst case expected utility when the sender plays $\pi.$ Let \begin{equation}
D(\pi)= \{\sigma \in \brr(\pi)| \sigma(x|y) \in \{0,1\}, \forall x\in \Xscr, y \in \Yscr\}
\end{equation} be the collection of \textit{deterministic} best response strategies of the receiver and let \begin{equation}
 \overline{D}(\pi)=\{\sigma \in D(\pi)| U(\pi,\sigma)=\max \limits_{\sigma' \in \brr(\pi) } U(\pi,\sigma')\}.
 \end{equation} be the collection of all those deterministic best response strategies of the sender which give the sender the \textit{best case expected utility}.

 We study this sender-receiver interaction as Bayesian persuasion setting and seek a \textit{min-max Stackelberg equilibrium} where the sender commits first. Thus, $ \pi^* \in \arg \sup \limits_{\pi \in \AS} \underline{U} (\pi) $
is a \textit{Stackelberg equilibrium strategy} of the sender and
\begin{equation}\label{eq:firstOptimization}
\ut^*: =\sup \limits_{\pi \in \AS} \underline{U} (\pi),
\end{equation}
is the \textit{Stackelberg game value} (SGV).
We also introduce another quantity $\overline\ut^*$ \begin{equation}\label{eq:firstOptimization}
\overline\ut^*: =\sup \limits_{\pi \in \AS} \overline{U} (\pi)
\end{equation} to quantify the supremum of the best case expected utility of the sender.

In our study we find that there exists a sender strategy $\overline{\pi}^* \in \AS$ which attains $\overline{\ut}^*$, \ie  results in $\overline{U}(\overline{\pi}^*)=\overline{\ut}^*.$ 
Unfortunately, there seem to generically exist utility functions (see Example \ref{example1} below)  for which the supremum $\Ut^*$ is not attained by any $\pi \in \AS$. But remarkably, we find that $\ut^*=\overline{\ut}^*.$  Since a Stackelberg equilibrium need not exist, we base our analysis on the $\varepsilon$-\textit{Stackelberg equilibrium} defined below.
\begin{definition}($\varepsilon$-Stackelberg equilibrium strategy) Let $\varepsilon \geq 0$. A strategy $\pi^{*} \in \AS$ is a $\varepsilon$-Stackelberg equilibrium strategy (\ese) of the sender if
\begin{equation}\label{eq:S.E.1}
\Ut^* \geq  \underbar{U} (\pi ^*)   \geq \Ut^*-\varepsilon.
\end{equation}
If $\pi^*$ is a \ese\ of the sender then every $\sigma^* \in \brr(\pi^*)$ is a \ese \ of the receiver when the sender plays $\pi^*$.
		\end{definition}
If $\varepsilon=0$, then we shall call the pair of $\pi^*$ and $\sigma^* \in \brr(\pi^*)$ a pair of \textit{Stackelberg equilibrium strategies}. In the following section we present our analysis of the Stackelberg game value.

\section{LP formulation for the SGV}\label{sec3}
In this section we  prove that the SGV is given by the optimal value of a linear program.  For computing the SGV, one would ordinarily compute $\underline{U}(\pi)$ for every strategy in $ \pi \in \AS$, and thereafter optimize over $\pi$. The LP we present not only significantly simplifies this computation but also yields insight into truthful revelation by the sender. This result, shown in Theorem \ref{theo:main} below, is the main contribution of our paper.
\subsection{Trust constraints}
Before we present the actual LP formulation, we provide a bit of motivation. For any pair of $\pi$ and $\sigma\in \brr(\pi)$, and $x,\xhat\in \Xscr$ define $\mu(\xhat|x)$ as
\begin{equation}\label{eq:mu_def}
\small
\mu (\xhat|x)=\sum \limits_{y \in \Yscr} \pi(y|x)\sigma(\xhat|y).
\end{equation} It is easy to check that  $\mu(\xhat|x) = \Pbb_{\pi,\sigma}(\widehat{X}=\xhat|X=x)$.
We shall call $\mu$ \textit{equivalent} to a pair of $\pi\in \AS$ and  a $\sigma \in \brr(\pi)$ (denoted $\mu \equiv (\pi,\sigma)$), if \eqref{eq:mu_def} holds for all $x,\xhat\in \Xscr$.
With a slight abuse of notation, let \begin{equation}
\widehat{\mathcal{X}}(\mu)=\{x\in \Xscr| \quad \mu(x|x)>0\}.
\end{equation} Notice that if $\mu \equiv (\pi,\sigma)$, then $\widehat{\mathcal{X}}(\mu)=\widehat{\mathcal{X}}(\pi,\sigma)$. Thus, if \begin{equation}\label{eq:opt_strat_struc}
V(\mu):= \sum \limits_{x,\xhat \in \Xscr}\mu(\xhat|x) \ut(\xhat,x),
\end{equation} then $U(\pi,\sigma)=V(\mu).$ Hence, the sender's objective is to identify a $\pi \in \AS$ such that the $\mu$ constructed using $\pi$ and  a $\sigma \in \underline{\brr}(\pi)$ must give $V(\mu)=\Ut^*$. Our goal is to eliminate $\pi,\sigma$ from this description and state the sender's objective directly in terms of $\mu.$
The challenge of course is that while any pair of $\pi$ and $\sigma \in \brr(\pi)$ corresponds to a  $\mu \in \Pscr(\Xscr|\Xscr)$, not every $\mu \in \Pscr(\Xscr|\Xscr)$ corresponds to a pair of $\pi$ and $\sigma \in \brr(\pi)$.

To constrain the allowable $\mu$'s further, observe the following. Every $\mu$ constructed using a pair of $\pi$ and $\sigma \in \brr(\sigma)$ must satisfy \begin{equation}\label{eq:trustdef}
  \mu(\xhat|\xhat)\geq \mu(\xhat|x),\forall x,\xhat \in \Xscr.
  \end{equation}
In other words, the probability of recovering $\xhat \in \Xscr$ as itself must be no less than that of recovering any $x$ as $\xhat$. The persuasion of the sender induces a probability distribution of recovery that is \textit{greater} for correct recovery than it is for incorrect recovery. We call these the \textit{trust constraints}. A persuasion strategy of the sender works for a receiver who wants to recover the truth \textit{only if} it wins the receiver's trust by obeying the trust constraints.

Remarkably, we find that the \textit{converse} is also true: every distribution $\mu$ which satisfies the trust constraints can be constructed from some pair of $\pi$ and $\sigma \in \brr(\pi)$. This paves the way for a simple, linear programming based characterization of the SGV. We prove this in Theorem \ref{theo:main}.

We first prove the validity of \eqref{eq:trustdef}
in the following lemma.

\begin{lemma}Every $\mu$ constructed from a pair of $\pi$ and a $\sigma \in \brr(\pi)$ using \eqref{eq:mu_def} satisfies the trust constraints.
\end{lemma}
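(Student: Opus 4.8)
The statement to prove is that any $\mu$ arising from a pair $\pi \in \AS$ and $\sigma \in \brr(\pi)$ via $\mu(\xhat|x) = \sum_{y \in \Yscr} \pi(y|x)\sigma(\xhat|y)$ satisfies $\mu(\xhat|\xhat) \geq \mu(\xhat|x)$ for all $x, \xhat \in \Xscr$. The plan is to exploit the characterization of best responses in \eqref{eq:brr_eq}: a receiver strategy $\sigma$ is a best response to $\pi$ exactly when, for every signal $y$ used by $\pi$, the support of $\sigma(\cdot|y)$ is contained in $\arg\max_x \pi(y|x)$. So whenever $\sigma(\xhat|y) > 0$, the symbol $\xhat$ maximizes $\pi(y|\cdot)$, which means $\pi(y|\xhat) \geq \pi(y|x)$ for every $x \in \Xscr$.

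\textbf{Main argument.} Fix $x, \xhat \in \Xscr$. I would write
\[
\mu(\xhat|x) = \sum_{y \in \Yscr} \pi(y|x)\sigma(\xhat|y)
= \sum_{y : \sigma(\xhat|y) > 0} \pi(y|x)\sigma(\xhat|y),
\]
since the terms with $\sigma(\xhat|y) = 0$ contribute nothing. For each $y$ in this sum, $\sigma(\xhat|y) > 0$ forces $\xhat \in \arg\max_{x'} \pi(y|x')$ by \eqref{eq:brr_eq}, hence $\pi(y|\xhat) \geq \pi(y|x)$. Substituting term by term,
\[
\mu(\xhat|x) = \sum_{y : \sigma(\xhat|y) > 0} \pi(y|x)\sigma(\xhat|y)
\leq \sum_{y : \sigma(\xhat|y) > 0} \pi(y|\xhat)\sigma(\xhat|y)
= \sum_{y \in \Yscr} \pi(y|\xhat)\sigma(\xhat|y) = \mu(\xhat|\xhat),
\]
which is exactly \eqref{eq:trustdef}. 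Since $x, \xhat$ were arbitrary, the trust constraints hold.

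\textbf{Where the care is needed.} There is no serious obstacle here; the lemma is essentially a direct unpacking of the best-response condition. The only point that warrants a careful word is the behaviour of signals $y \notin \Yscr(\pi)$, i.e.\ signals not used by $\pi$: for such $y$ we have $\pi(y|x) = 0$ for every $x$, so those terms vanish on both sides and cause no trouble, and moreover the support condition in \eqref{eq:brr_eq} is only imposed on $y \in \Yscr(\pi)$. One should also note that the inequality $\pi(y|\xhat) \ge \pi(y|x)$ is applied only for $y$ with $\sigma(\xhat|y)>0$, and for such $y$ we indeed have $y \in \Yscr(\pi)$ (otherwise $\sigma$ could not place mass there consistently with being a best response, and in any case $\pi(y|x)=0$ makes the bound trivial). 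With these remarks the proof is complete.
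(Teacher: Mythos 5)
Your proof is correct and follows essentially the same route as the paper's: both use the best-response characterization \eqref{eq:brr_eq} to get $\pi(y|\xhat)\ge\pi(y|x)$ whenever $\sigma(\xhat|y)>0$, then multiply by $\sigma(\xhat|y)$ and sum over $y$. Your extra remarks about unused signals $y\notin\Yscr(\pi)$ are a harmless bit of added care, not a different argument.
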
 \label{lemma:mu_pi_equiv}
\begin{proof} Fix two distinct symbols $x$ and $\xhat \in \Xscr$. From \eqref{eq:brr_eq}, it is clear that if $\sigma(\xhat|y)>0$ then $\pi(y|\xhat)\geq \pi(y|x),\forall y \in \Yscr$.  Multiplying both sides by $\sigma(\xhat|y)$ and summing over $y\in \Yscr$ gives
 $\sum \limits_{y \in \Yscr} \pi(y|\xhat)\sigma(\xhat|y)
  \geq \sum \limits_{y \in \Yscr} \pi(y|x)\sigma(\xhat|y)$,  which results in
  $\mu(\xhat|\xhat) \geq \mu(\xhat|x).$ This proves our lemma.
  \end{proof}
Therefore, an expected utility of $V(\mu)$ is attainable by a sender only when $\mu$ also satisfies the trust constraints.

We now show another lemma that gives the sender the BCEU.
\begin{lemma}\label{Lemma:determisiticBR} For every $\pi$, $\overline{D}(\pi)\neq \phi$.
\end{lemma}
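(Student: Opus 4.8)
The plan is to exploit linearity: for a fixed $\pi\in\AS$ the map $\sigma\mapsto U(\pi,\sigma)$ is linear, and by \eqref{eq:brr_eq} the feasible set $\brr(\pi)$ is (a Cartesian product of) probability simplices, so a maximizer of $U(\pi,\cdot)$ over $\brr(\pi)$ can be chosen at a vertex, which is a deterministic receiver strategy. Concretely, I would proceed as follows.

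First I would record that $\brr(\pi)$ is nonempty and compact: nonemptiness is immediate (for each $y$ place all mass on $\arg\max_{x}\pi(y|x)$), and compactness holds because, by \eqref{eq:brr_eq}, $\brr(\pi)$ is the subset of the compact set $\{\sigma\in\AR\}$ carved out by the \emph{closed} conditions $\sigma(x|y)=0$ whenever $x\notin\arg\max_{x'}\pi(y|x')$. Since $U(\pi,\cdot)$ is continuous, $\overline U(\pi)=\max_{\sigma\in\brr(\pi)}U(\pi,\sigma)$ is attained, and $\overline D(\pi)$ is exactly the set of deterministic elements of $\brr(\pi)$ attaining this value; hence it suffices to exhibit one such element.

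Next I would rewrite the objective to expose its separability: interchanging sums, $U(\pi,\sigma)=\sum_{y\in\Yscr(\pi)}\sum_{\xhat\in\Xscr}\sigma(\xhat|y)\,c_y(\xhat)$ with $c_y(\xhat):=\sum_{x\in\Xscr}\pi(y|x)\ut(\xhat,x)$. The summand indexed by $y$ depends only on $\sigma(\cdot|y)$, and by \eqref{eq:brr_eq} the constraint defining $\brr(\pi)$ is likewise decoupled across $y$: $\sigma(\cdot|y)$ must be a probability distribution supported in $M_y:=\arg\max_{x\in\Xscr}\pi(y|x)$ (for $y\notin\Yscr(\pi)$ there is no constraint, $M_y=\Xscr$, and such $y$ do not contribute to $U$). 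Thus maximizing $U(\pi,\cdot)$ over $\brr(\pi)$ splits into independent problems: for each $y$, maximize the linear functional $\sigma(\cdot|y)\mapsto\sum_{\xhat\in M_y}\sigma(\xhat|y)\,c_y(\xhat)$ over the simplex $\Pscr(M_y)$. Each such problem is solved at a vertex: choose $\xhat^*_y\in\arg\max_{\xhat\in M_y}c_y(\xhat)$ and set $\sigma^*(\xhat^*_y|y)=1$. Collecting these choices (with any deterministic completion on the irrelevant signals $y\notin\Yscr(\pi)$) gives $\sigma^*\in D(\pi)$ with $U(\pi,\sigma^*)=\sum_{y\in\Yscr(\pi)}\max_{\xhat\in M_y}c_y(\xhat)\ge U(\pi,\sigma)$ for every $\sigma\in\brr(\pi)$, so $U(\pi,\sigma^*)=\overline U(\pi)$ and $\sigma^*\in\overline D(\pi)$.

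I do not anticipate a genuine obstacle here — the statement is essentially the standard fact that a linear program over a Cartesian product of simplices attains its optimum at a vertex, which in this context is a deterministic best response. The only points requiring mild care are (i) the correct reading of \eqref{eq:brr_eq}, namely that $\brr(\pi)$ is the \emph{entire} product of simplices over the argmax sets (rather than their relative interiors), which is precisely what the best-response condition yields; and (ii) the harmless bookkeeping for signals $y\notin\Yscr(\pi)$, which neither constrain $\sigma$ nor affect $U(\pi,\sigma)$.
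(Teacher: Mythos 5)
Your proposal is correct and follows essentially the same route as the paper's proof: exploit linearity of $U(\pi,\cdot)$, decouple the maximization over $\brr(\pi)$ into one linear program per signal $y$ on the simplex supported in $\arg\max_x\pi(y|x)$, and take a vertex optimizer, which is deterministic. Your version merely spells out the compactness/attainment and bookkeeping details that the paper leaves implicit.
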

\begin{proof}
Notice that $U(\pi,\sigma)$ is linear in $\sigma$ for a fixed $\pi$. Thus, for any fixed $\pi$, finding a $\sigma\in \brr(\pi)$ which maximizes $U(\pi,\sigma)$ is equivalent to finding a $\sigma(\bullet|y)$ which maximizes $\Bigl(\sum \limits_{x\in \Xscr}\pi(y|x)\ut(\hat{x},x)\Bigl ) \sigma (\hat{x}|y)$ which is linear program for every $y\in \Yscr(\pi)$. Using \eqref{eq:brr_eq}, it follows that there exists a deterministic optimal  $\sigma\in \brr(\pi)$. This proves our theorem.
\end{proof}
Let \begin{equation}
\Ubr:=\{\pi \in \AS||\brr(\pi)|=1 \},
\end{equation}
be the collection of sender strategies which have a unique best response. It is easy to see that if $\pi \in \Ubr$, then the unique $\sigma \in \brr(\pi)$ must be deterministic (using Lemma \ref{Lemma:determisiticBR}).

\subsection{A linear program with trust constraints}
    Consider the LP denoted by $\lpu$, where 
    \begin{alignat}{4}
        \lpu:\quad \max_{\mu}& & \sum \limits_{x,\xhat}\mu(\xhat|x)\ut(\xhat,x) & & \nonumber             \\
        \text{s.t.} & &  \mu\in\Pscr(\Xscr|\Xscr),\\
        & &  \mu \text{satisfies } \eqref{eq:trustdef}. \label{eq:lpc1}
    \end{alignat}
In this section we prove our main result,  where we show that the optimal value of $\lpu$ is exactly equal to the SGV. We begin by first proving that in every optimal solution $\mu^*$, positive probability is never assigned to negative utility values. We prove this using the dual of $\lpu$ denoted by $\dpu$, where
\begin{equation}
\begin{aligned}
\dpu:\min_{w,v} \quad & \sum \limits_{x \in \Xscr} w(x),\\
\textrm{s.t.} \quad & w(x)- \sum \limits_{\xhat\neq x \in \Xscr } v(x,\xhat)\geq 0, \forall x\in \Xscr \\
\quad & w(x) + v(\xhat,x)-\ut(\xhat,x)\geq 0,\forall x \neq \xhat \in \Xscr\\
\quad & v(\xhat,x)\geq 0,\forall x \neq \xhat \in \Xscr\\
\quad & w(x), \text{unrestricted}, \forall x \in \Xscr.
\end{aligned}
\end{equation}
\begin{proposition}\label{prop:positive_prob_positive_values}
Let $\mu^*$ be an optimal solution of $\lpu$. If for some $x$ and $\xhat \in \Xscr$, $\ut(\xhat,x)<0$, then $\mu^*(\xhat|x)=0.$
\end{proposition}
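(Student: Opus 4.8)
The plan is to read the conclusion off strong LP duality and complementary slackness between $\lpu$ and $\dpu$, exploiting the normalization $\ut(x,x)=0$. First I would record that $\lpu$ is feasible --- the deterministic identity kernel, which recovers each symbol as itself, lies in $\Pscr(\Xscr|\Xscr)$ and trivially satisfies \eqref{eq:trustdef} --- and that its feasible set is compact, being a closed subset of $[0,1]^{\Xscr\times\Xscr}$. Hence $\OPT(\lpu)$ is attained at some $\mu^*$, strong duality holds, and there is a dual-optimal pair $(w^*,v^*)$ for $\dpu$ such that complementary slackness holds between $\mu^*$ and $(w^*,v^*)$.

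The key observation is that \emph{every} feasible point of $\dpu$ has $w(x)\ge 0$ for all $x\in\Xscr$: the first family of dual constraints gives $w(x)\ge \sum_{\xhat\neq x} v(x,\xhat)$, and each $v(x,\xhat)\ge 0$, so $w(x)\ge 0$. In particular $w^*(x)\ge 0$ for all $x$.

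Next I would fix $x,\xhat\in\Xscr$ with $\ut(\xhat,x)<0$. Since $\ut(x,x)=0$, necessarily $\xhat\neq x$, so the dual constraint indexed by the primal variable $\mu(\xhat|x)$ is $w(x)+v(\xhat,x)-\ut(\xhat,x)\ge 0$. Suppose toward a contradiction that $\mu^*(\xhat|x)>0$. Complementary slackness then forces this constraint to be tight for $(w^*,v^*)$, i.e. $\ut(\xhat,x)=w^*(x)+v^*(\xhat,x)$. But $w^*(x)\ge 0$ by the previous step and $v^*(\xhat,x)\ge 0$ by dual feasibility, so $\ut(\xhat,x)\ge 0$, contradicting $\ut(\xhat,x)<0$. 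Therefore $\mu^*(\xhat|x)=0$, as claimed.

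I expect no serious obstacle here: once the dual $\dpu$ and complementary slackness are in hand the argument is a two-line chase, and the only point needing a sentence of justification is that strong duality and attainment of both optima are legitimate, which follows from primal feasibility together with compactness of the primal feasible set. (As a sanity check one can also see the statement directly: shifting the mass $\mu^*(\xhat|x)$ onto the diagonal entry $\mu^*(x|x)$ keeps $\mu^*$ a valid conditional distribution, only enlarges a diagonal entry and shrinks an off-diagonal one so \eqref{eq:trustdef} is preserved, and raises the objective by $-\mu^*(\xhat|x)\,\ut(\xhat,x)>0$ since $\ut(x,x)=0$ --- contradicting optimality; but the dual route is the one consistent with the surrounding development.)
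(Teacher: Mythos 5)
Your proof is correct and follows essentially the same route as the paper's: invoke complementary slackness on the dual constraint $w(x)+v(\xhat,x)-\ut(\xhat,x)\geq 0$ associated with $\mu(\xhat|x)$, then use $w(x)\geq\sum_{\xhat\neq x}v(x,\xhat)\geq 0$ and $v(\xhat,x)\geq 0$ to force $\ut(\xhat,x)\geq 0$, a contradiction. Your added remarks on attainment/strong duality and the direct mass-shifting sanity check are sound but not part of the paper's (terser) argument.
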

\begin{proof} Fix a pair of $x$ and $\xhat \in \Xscr$ for which $\ut(\xhat,x)<0.$ Suppose that there exists an optimal solution $\mu^*$ such that $\mu^*(\xhat|x)>0$. Then by using the complementary slackness condition of $\lpu$, we have
$\ut(\xhat,x) = v(\xhat,x)+w(x).$
Recall that in $\dpu$, we had $w(x)\geq  \sum \limits_{\xhat} v(x,\xhat) \implies w(x)\geq 0$. Therefore, $\ut(\xhat,x)\geq 0$, which is a contradiction since $\ut(\xhat,x)<0.$ This proves our proposition.
\end{proof}
In the following lemma, we prove the existence of  an infinite sequence of strategies of the sender   whose WCEU converges to $\OPT(\lpu)$.
\begin{lemma}\label{lemma:sequence_convergent}
Given any utility function $\ut$, there exists a sequence of strategies $\{\pi_k\}_{k\in \Nbb}$ such that $\lim \limits_{}$ $\lim \limits_{k\rightarrow \infty} \underline{U}(\pi_k)=\OPT(\lpu)$.
\end{lemma}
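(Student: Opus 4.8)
The plan is to construct, from an optimal solution $\mu^*$ of $\lpu$, a sequence $\{\pi_k\}$ of sender strategies that ``implements'' $\mu^*$ in the limit, with the receiver's worst-case best response converging to the value $V(\mu^*)=\OPT(\lpu)$. The natural candidate is to let the sender use $|\Xscr|=q$ distinct signals, one ``associated'' to each recovered symbol $\xhat$, and put $\pi_k(y_{\xhat}\mid x)\approx\mu^*(\xhat\mid x)$. By the trust constraints \eqref{eq:trustdef}, $\mu^*(\xhat\mid\xhat)\ge\mu^*(\xhat\mid x)$ for all $x$, so the unperturbed assignment $\pi(y_{\xhat}\mid x)=\mu^*(\xhat\mid x)$ already makes $\arg\max_x\pi(y_{\xhat}\mid x)\ni\xhat$; hence $\sigma(\xhat\mid y_{\xhat})=1$ is \emph{a} best response on signal $y_{\xhat}$ by \eqref{eq:brr_eq}, and the induced $\mu$ equals $\mu^*$ so $U=V(\mu^*)$. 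The issue is that $\xhat$ need not be the \emph{unique} maximiser of $\pi(y_{\xhat}\mid\cdot)$, so the worst-case best response $\underline{U}(\pi)$ could be strictly smaller than $V(\mu^*)$; this is exactly why a limiting sequence, rather than a single $\pi$, is needed.

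First I would fix an optimal $\mu^*$; by Proposition~\ref{prop:positive_prob_positive_values}, $\mu^*$ places positive mass only on nonnegative utility entries, and by hypothesis $\ut(x,x)=0$. Next, for each $k$ I would define a perturbed strategy $\pi_k$ that (i) still has $q$ signals $y_1,\dots,y_q$, (ii) satisfies $\pi_k(y_{\xhat}\mid\xhat)>\pi_k(y_{\xhat}\mid x)$ \emph{strictly} for every $x\neq\xhat$, so that the unique best response forces $\sigma_k(\xhat\mid y_{\xhat})=1$ and hence $\pi_k\in\Ubr$, and (iii) has $\pi_k\to\pi^\infty$ with the induced conditional law $\mu_k\to\mu^*$ entrywise. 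A concrete way: set $\pi_k(y_{\xhat}\mid x)=(1-\tfrac1k)\mu^*(\xhat\mid x)+\tfrac1k\,\delta_{x,\xhat}$ for each $x$; this is a valid conditional distribution for each $x$ (rows sum to $1$), and the added $\tfrac1k$ on the diagonal makes the inequality in (ii) strict because $\mu^*(\xhat\mid\xhat)\ge\mu^*(\xhat\mid x)$ already held weakly. Then $\brr(\pi_k)$ is the singleton $\{\sigma_k\}$ with $\sigma_k(\xhat\mid y_{\xhat})=1$, so $\underline{U}(\pi_k)=\overline{U}(\pi_k)=U(\pi_k,\sigma_k)=\sum_{x,\xhat}\pi_k(y_{\xhat}\mid x)\tfrac1q\cdot q\cdot\ut(\xhat,x)$ — wait, more precisely $U(\pi_k,\sigma_k)=\sum_{x}\sum_{\xhat}\pi_k(y_{\xhat}\mid x)\ut(\xhat,x)=\sum_{x,\xhat}\big[(1-\tfrac1k)\mu^*(\xhat\mid x)+\tfrac1k\delta_{x,\xhat}\big]\ut(\xhat,x)=(1-\tfrac1k)V(\mu^*)+\tfrac1k\sum_x\ut(x,x)=(1-\tfrac1k)\OPT(\lpu)$.

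Finally I would conclude: $\underline{U}(\pi_k)=(1-\tfrac1k)\OPT(\lpu)\to\OPT(\lpu)$ as $k\to\infty$, which is the claim. I expect the main obstacle to be (a) verifying that $\pi_k\in\AS$ is genuinely a valid strategy and that its unique best response is the intended $\sigma_k$ — this hinges on the strictness of the diagonal dominance obtained from the trust constraints plus the $\tfrac1k$ perturbation, together with \eqref{eq:brr_eq} — and (b) making sure the bookkeeping when several $\mu^*(\xhat\mid\cdot)$ are identically zero (so signal $y_{\xhat}$ is used with total probability $\tfrac1{qk}$ only) does not spoil the argument; a small amount of care shows such signals are still legitimately present with positive probability, so $\Yscr(\pi_k)=\{y_1,\dots,y_q\}$ and the support conditions in \eqref{eq:brr_eq} are unambiguous. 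One should also note that $\OPT(\lpu)$ is attained — the feasible region of $\lpu$ is a nonempty compact polytope (it contains $\mu=\mathrm{Id}$) and the objective is linear — so ``$\mu^*$ optimal'' is well-defined, closing the argument.
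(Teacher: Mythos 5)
Your proposal is correct, and the perturbation you use is genuinely different from (and cleaner than) the paper's. Both arguments share the same skeleton: fix an optimal $\mu^*$ of $\lpu$ (which is attained, as you note), realize it with one signal per recovered symbol and the deterministic decoder $\sigma(\xhat|y_{\xhat})=1$, then perturb so that $\xhat$ becomes the \emph{strict} maximiser of $\pi_k(y_{\xhat}|\cdot)$, forcing $\pi_k\in\Ubr$ and hence $\underline{U}(\pi_k)=U(\pi_k,\sigma_k)$. The paper does the strictification by a targeted $\pm\delta/k$ shift applied only to the entries where the trust constraint \eqref{eq:trustdef} is tight, which requires introducing the sets $Q(i)$, $Z(x)$, a four-way partition of $\Xscr$, auxiliary signals for $x\notin\Xhat(\mu^*)$, and a three-case verification of \eqref{eq:proving_unique}. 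Your convex combination $\pi_k(y_{\xhat}|x)=(1-\tfrac1k)\mu^*(\xhat|x)+\tfrac1k\delta_{x,\xhat}$ accomplishes all of this in one stroke: rows automatically sum to one, the $+\tfrac1k$ on the diagonal turns every weak trust inequality into a strict one simultaneously, and because $\ut(x,x)=0$ the value comes out exactly as $(1-\tfrac1k)\OPT(\lpu)\to\OPT(\lpu)$. The only structural difference is that you use all $q$ signals (including $y_{\xhat}$ for symbols with $\mu^*(\xhat|\cdot)\equiv 0$, which then carry mass $\tfrac1k$ from $\xhat$ alone), whereas the paper restricts to $\Xhat(\mu^*)$ plus auxiliaries; since $|\Yscr|=q$ and these extra signals contribute only diagonal, zero-utility terms, this is harmless. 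Your sequence also converges to a $\pi^\infty$ with $\mu^*\equiv(\pi^\infty,\sigma)$ for a deterministic $\sigma\in\brr(\pi^\infty)$, so it would serve equally well in Corollary \ref{cor:main_theo}.
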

\begin{proof}
Fix an optimal solution $\mu^*$ of $\lpu$.  We will first construct a $\pi^* \in \AS$  and $\sigma^*\in \brr(\pi^*)$ such that $\mu^*\equiv(\pi^*,\sigma^*)$. To this end pick distinct elements $y_i\in \Yscr$ for each $i\in \Xhat(\mu^*)$. Define $\pi^*$ and $\sigma^*$ as follows:
\begin{equation}\label{eq:mu_pi_def}
\begin{split}
\pi^*(y_i|x)  &=\mu^*(i|x), \qquad \forall i \in \Xhat(\mu^*),\forall x \in \Xscr \text{and}\\
\sigma^*(i|y_i) &=1, \forall i \in \Xhat(\mu^*).
\end{split}
 \end{equation}
Clearly $\sigma^*$ is deterministic and $\Yscr(\pi^*)=\{ y_i|i \in \Xhat(\mu^*)\}$ which results in $|\Yscr(\pi^*)|=|\Xhat(\mu^*)|$. Note that $\mu^*$ satisfies the trust constraints. As a consequence,
\begin{equation}\label{eq:ineq_equality}
\pi^*(y_{i}|i)\geq \pi^*(y_i|x), \forall x \in \Xscr,\forall i \in \Xhat(\mu^*)
\end{equation} which makes $\sigma^*\in D(\pi^*)$.
Since $\mu^*(\xhat|x)\equiv\sum \limits_{y\in \Yscr}\pi^*(y|x)\sigma^*(\xhat|y)$ and $V(\mu^*)=U(\pi^*,\sigma^*)$, we get $\mu^*\equiv (\pi^*,\sigma^*)$.

 Next we show that $\sigma^*\in \overline{D}(\pi^*)$. Notice that if $\sigma^* \notin \overline{D}(\pi^*)$, then there exists another $\sigma'\in \overline{D}(\pi^*)$ such that $U(\pi^*,\sigma^*)<U(\pi^*,\sigma')$. In such scenario, we can construct another $\mu'$ from the pair of $\pi^*$ and $\sigma'$ such that $V(\mu^*)<V(\mu')$. But $\mu'$ is feasible for $\lpu$ which is a contradiction to the assumption that $\mu^*$ is optimal. Therefore,
$\sigma^* \in \overline{D}(\pi^*)$. Now $\pi^*$ can be categorized into two different classes based on the relationship between $\underline{U}(\pi^*)$ and $V(\mu^*)$.
\begin{mycases}
\item $\underline{U}(\pi^*)=V(\mu^*)$: For any $\pi^*$ of this class, the result follows trivially.
\item $\underline{U}(\pi^*)<V(\mu^*)$: For every $\pi^*$ of this class there exists a distinct $\sigma'\in \brr(\pi^*)$  such that $U(\pi^*,\sigma')<U(\pi^*,\sigma^*)=V(\mu^*)$. Our main goal is to come up with a sequence $\{\pie\}_{k\rightarrow \infty}$ such that $\lim \limits_{k\rightarrow \infty} \underline{U}(\pi_k)=\OPT(\lpu)$. To ensure this we want $\underline{U}(\pie)$ to approach $U(\pi^*,\sigma^*)$ as $k\rightarrow \infty$. We construct our $\pie$'s in such a way that $\forall k \in \Nbb,$ the corresponding $\pie \in \Ubr.$ This give us control over the value of $\underline{U}(\pie)$ for every $k$ since there exists only one best response strategy. We also want the construction to ensure that $k\rightarrow \infty$, $(\OPT(\lpu)-\underline{U}(\pie))\rightarrow 0.$ This will prove our theorem. To proceed with the construction, we introduce the following two sets. For a $i \in \Xhat(\mu^*)$ and $x\in \Xscr$, let \begin{equation*}
\begin{split}
Q(i)&=\{x  \in \Xscr| x \neq i, \pi^*(y_i |i)=\pi^*(y_i|x)\},\\
Z(x)&=\{i \in \Xhat(\mu^*)|x\in Q(i)\}.
\end{split}
 \end{equation*}
$Q(i)$ is the collection of all those symbols  $x\in \Xscr$ for which equality holds in \eqref{eq:ineq_equality}. Hence, $Q(i)$ comprises of all $x\in \Xscr$ distinct from $i$  that the receiver can map $y_i$ to in a best response.
$Z(x)$ is the collection of all $i\in \Xhat(\mu^*)$ for which equality holds in \eqref{eq:ineq_equality} for the fixed $x$. Hence, $Z(x)$ represents the collection of all signals distinct from $y_x$ that the receiver can map to $x$ as a best response. To construct such a $\pie$ from $\pi^*$, we need to define $\pie(\bullet|x)$ for all $x\in \Xscr.$ First we partition $\Xscr$ into two main classes based on the structure of $\mu^*$ and $\pi^*$.
\begin{myclass}
\item $x\in \Xhat(\mu^*)$: This class can be further subdivided into the following two classes:
\begin{enumerate}[label=(\alph*)]
\item $x\notin \cup_{i \in \Xhat(\mu^*)}Q(i)$ :
 For such $x$, define \[\pie(y_i|x)=\pi^*(y_i|x),\forall i \in \Xhat(\mu^*).\] Clearly,  $\pie(\bullet|x)$ is a probability distribution.
\item $x \in \cup_{i \in \Xhat(\mu^*)}Q(i)$: For such an $x$ of this class and an $y_i\in Z(x)$, we define \begin{equation}
\begin{split}
\pie(y_i|x)& =\pi^*(y_i|x)- \frac{\delta}{k},\\
\end{split}
\end{equation}
where $\delta>0$ is small.
To balance the weight of the distribution, we define:
\begin{equation}\label{eq:gh}
\begin{split}
\pie(y_x|x)&= \pi^*(y_x|x)+ |Z(x)|\frac{\delta}{k}\\
\pie(y_i|x)&= \pi^*(y_i|x),\forall i \in \Xhat(\mu^*)\backslash (Z(x) \cup\{x\}).
\end{split}
\end{equation}
This definition ensures that $\pie(\bullet|x)$ is a probability distribution.
\end{enumerate}
\item $x\notin \Xhat(\mu^*)$: This class can be further subdivided into two classes:
\begin{enumerate}[label=(\alph*)]
\item $x\notin \cup_{i \in \Xhat(\mu^*)}Q(i)$: 
For an $x$ of this class we define \[\pie(y_i|x)=\pi^*(y_i|x), \forall y_i \in \Yscr(\pi^*).\] 
Clearly, $\pie(\bullet|x)$ is a probability distribution.
\item $x \in \cup_{i \in \Xhat(\mu^*)}Q(i)$: For every $x$ in this class  let \begin{equation}\label{eq:c2b}
\begin{split}
\pie(y_i|x) &=\pi^*(y_i|x)- \frac{\delta}{k},\forall i\in Z(x)\\
\pie(y_i|x) &= \pi^*(y_i|x),\forall i \in \Xhat(\mu^*)\backslash Z(x).
\end{split}
\end{equation}
Next  for every $x$ belonging to this class consider a distinct $y_x\in \Yscr\backslash\Yscr(\pi^*)$. Let $\Yscr'(\pi^*)$ be the collection of all such $y_x$.  
Let \begin{equation}
\begin{split}
\pie(y_x|x)&= |Z(x)|\frac{\delta}{k},\forall y_x\in \Yscr'(\pi^*).
\end{split}
\end{equation}
\end{enumerate}
\end{myclass}
\end{mycases}
From the construction of $\pie$, it is evident that $\Yscr(\pie)=\Yscr(\pi^*)\cup \Yscr'(\pi^*)$, a constant independent of $k$ and $\pie\rightarrow \pi^*$.
Now we prove that this construction ensures \begin{equation}\label{eq:proving_unique}
\pie(y_i|i)>\pie(y_i|x),\forall x \in \Xscr\backslash\{i\},\forall y_i \in \Yscr(\pie).
\end{equation}
 Observe that \begin{align}
\pi^*(y_i|x)& \geq \pie(y_i|x),\forall i\neq x \label{eq:construc_conseq} \text{and}\\
\pie(y_x|x)& \geq \pi^*(y_x|x),\forall x\in \Xscr.\label{eq:construc_conseq2}
\end{align}
We will first categorize $y_i\in \Yscr(\pie)$ into three classes:  
\begin{enumerate}[label=(\Alph*)]
\item $y_i\in \Yscr'(\pi^*)$: For every $y_i$ of this class $\pie(y_i|i)>0$ and $\pie(y_i|x)=0,\forall x\in \Xscr\backslash\{i\}$. As a consequence \eqref{eq:proving_unique} holds for this class.
\item $y_i\in \Yscr(\pi^*)$ and $Q(i)=\phi$:
Clearly from \eqref{eq:ineq_equality}, for such a $y_i$, \begin{equation}\label{eq:ds}
\pi^*(y_i|i)>\pi^*(y_i|x),\forall x\neq i.
\end{equation} Observe that $\forall x\neq i,$ \begin{align*}
\pie(y_i|i)& \buildrel{(a)}\over\geq \pi^*(y_i|i)\buildrel{(b)}\over>\pi^*(y_i|x)\buildrel{(c)}\over\geq \pie(y_i|x),
\end{align*}
where $(a)$ follows from \eqref{eq:construc_conseq2}, $(b)$ follows from \eqref{eq:ds} and $(c)$ follows from \eqref{eq:construc_conseq}. Hence, \eqref{eq:proving_unique} holds.
\item $y_i\in \Yscr(\pi^*)$ and $Q(i)\neq\phi$: For every $y_i$ of this class, we have $\forall x\neq i$,
\begin{align*}\pie(y_i|i)&\buildrel{(a)}\over=\pi^*(y_i|i)+|Z(i)|\frac{\delta}{k}\\
&\buildrel{(b)}\over\geq \pi^*(y_i|x)+|Z(i)|\frac{\delta}{k}\buildrel{(c)}\over> \pie(y_i|x).
\end{align*}
\end{enumerate}
Here $(a)$ follows from $\eqref{eq:gh}$ and $(b)$ follows from \eqref{eq:construc_conseq}. Finally $(c)$ follows since $Z(i)$ is non-empty.
Therefore, we can conclude that \eqref{eq:proving_unique} holds. Thus, for all $k$, $\brr(\pie)=\{\sigma_k\}$, where $\sigma_k(i|y_i)=1, \forall y_i \in \Yscr(\pie)$. This shows that $\pie \in \Ubr, \forall k\in \Nbb.$ 
 Consequently, for every $k \in \Nbb$,
\begin{align*}
\underline{U}(\pie)=U(\pie,\sigma_{k})&=\sum \limits_{x\in \Xscr} \sum \limits_{y_i \in \Yscr(\pie)} \pie (y_i|x)\ut(i,x).
\end{align*}
Since $\Yscr(\pie)$ is independent of $k$, we get $\lim \limits_{k\rightarrow \infty} \underline{U}(\pie)=U(\pi^*,\sigma^*)=\OPT(\lpu).$ This proves our lemma.
\end{proof}
We can now proceed to prove our main theorem, where we show that $\OPT(\lpu)=\Ut^*$.
\begin{theorem}\label{theo:main}
For any utility function $\ut$,
$\Ut ^*=\OPT(\lpu).$
\end{theorem}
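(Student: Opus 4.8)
The plan is to establish the two inequalities $\Ut^* \le \OPT(\lpu)$ and $\Ut^* \ge \OPT(\lpu)$ separately. The second inequality is essentially already in hand: Lemma~\ref{lemma:sequence_convergent} produces a sequence $\{\pi_k\}$ with $\underline{U}(\pi_k) \to \OPT(\lpu)$, and since $\Ut^* = \sup_{\pi \in \AS}\underline{U}(\pi) \ge \underline{U}(\pi_k)$ for every $k$, taking the limit gives $\Ut^* \ge \OPT(\lpu)$ immediately.

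For the reverse inequality $\Ut^* \le \OPT(\lpu)$, the idea is to show that for \emph{every} $\pi \in \AS$ we have $\underline{U}(\pi) \le \OPT(\lpu)$, and then take the supremum over $\pi$. Fix an arbitrary $\pi \in \AS$ and pick any $\sigma \in \underline{\brr}(\pi)$, so that $\underline{U}(\pi) = U(\pi,\sigma)$. Form the associated $\mu \equiv (\pi,\sigma)$ via \eqref{eq:mu_def}. By construction $\mu \in \Pscr(\Xscr|\Xscr)$, and by Lemma~\ref{lemma:mu_pi_equiv} this $\mu$ satisfies the trust constraints \eqref{eq:trustdef}; hence $\mu$ is feasible for $\lpu$. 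Since $U(\pi,\sigma) = V(\mu)$ and $\mu$ is feasible, $V(\mu) \le \OPT(\lpu)$. Therefore $\underline{U}(\pi) \le \OPT(\lpu)$ for all $\pi$, and taking the supremum yields $\Ut^* \le \OPT(\lpu)$.

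Combining the two inequalities gives $\Ut^* = \OPT(\lpu)$, as claimed. The main obstacle in this argument is entirely contained in Lemma~\ref{lemma:sequence_convergent} (already proved above): the delicate construction of the perturbed strategies $\pi_k \in \Ubr$ that have a unique best response while converging to $\pi^*$, which is what pins down $\underline{U}(\pi_k)$ and forces it to converge to $V(\mu^*)$. Given that lemma, the theorem itself is a short sandwiching argument, with the feasibility direction being a routine check that the trust constraints (Lemma~\ref{lemma:mu_pi_equiv}) are exactly the constraints defining $\lpu$.
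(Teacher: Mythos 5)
Your proposal is correct and follows essentially the same route as the paper's own proof: the upper bound $\Ut^*\le\OPT(\lpu)$ by mapping any pair $(\pi,\sigma)$ with $\sigma\in\brr(\pi)$ to a feasible $\mu$ of $\lpu$ via Lemma~\ref{lemma:mu_pi_equiv}, and the lower bound by invoking the approximating sequence from Lemma~\ref{lemma:sequence_convergent} together with $\underline{U}(\pi_k)\le\Ut^*$. Your choice of $\sigma\in\underline{\brr}(\pi)$ to get equality $\underline{U}(\pi)=U(\pi,\sigma)=V(\mu)$ is a harmless minor variant of the paper's inequality $\underline{U}(\pi)\le V(\mu)$.
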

\begin{proof}
To prove our result, we begin by first showing that $\Ut^*$ is upper bounded by $\OPT(\lpu)$. Fix a $\pi \in \AS$ and a $\sigma\in \brr(\pi)$. We can construct a unique $\mu \equiv (\pi,\sigma)$. Therefore, $\underline{U} (\pi) \leq V(\mu)$. Thus,
  \begin{equation}\label{eq:jj}
 \small
 \Ut ^*\leq  \OPT(\lpu).
\end{equation}

The previous lemma guarantees the existence of a sequence of strategies $\{\pie\}_{k\in \Nbb}$ whose WCEU converges to $\OPT(\lpu)$, \ie,
\begin{equation}\label{eq:hh}
\lim \limits_{k \rightarrow \infty} \underline{U} (\pie)=\OPT(\lpu).
\end{equation}  
 But from the definition of $\ut^*$, $ \underline{U}(\pi_k)\leq \ut^*.$
Using \eqref{eq:jj} and \eqref{eq:hh} we get $\ut^*=\OPT(\lpu).$
\end{proof}
Next as a corollary we prove that the sequence of strategies constructed in Lemma \ref{lemma:sequence_convergent} is a sequence of \ese. Additionally, we also show that SGV must be the \textit{best case expected utility} for a strategy of the sender.
\begin{corollary}\label{cor:main_theo}
\begin{enumerate}
\item Let $\mu^*$  be an optimal solution of $\lpu$.  Then for each $k\in \Nbb$, there exists a $\varepsilon_k$-SES $\pi_k$ such that $\varepsilon_k \xrightarrow{k \rightarrow \infty} 0$ and $\{\pi_k\}_{k\in \Nbb}\xrightarrow{k \rightarrow \infty} \pi$, where $\mu^*\equiv (\pi,\sigma)$ for some $\sigma \in D(\pi)$.
\item Given any $\ut$, \begin{equation}\label{eq:cor-2}
\max \limits_{\pi\in \AS}\overline{U}(\pi)=\sup \limits_{\pi\in \AS}\underline{U}(\pi)=\Ut^*.
\end{equation}
\end{enumerate}
\end{corollary}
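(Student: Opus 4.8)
The plan is to extract both parts directly from the construction already carried out inside the proof of Lemma~\ref{lemma:sequence_convergent}, combined with Theorem~\ref{theo:main}; essentially no new estimate is needed.

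For part~1 I would fix an optimal solution $\mu^*$ of $\lpu$ and take $\{\pi_k\}_{k\in\Nbb}$ to be the very sequence produced in Lemma~\ref{lemma:sequence_convergent}, which by that proof converges to the strategy $\pi^*$ defined in \eqref{eq:mu_pi_def} and satisfies $\mu^* \equiv (\pi^*,\sigma^*)$ with the deterministic $\sigma^*\in D(\pi^*)$. Then I would set $\varepsilon_k := \Ut^* - \underline{U}(\pi_k)$; since $\underline{U}(\pi_k)\le \Ut^*$ always holds, $\varepsilon_k\ge 0$, and by Lemma~\ref{lemma:sequence_convergent} together with Theorem~\ref{theo:main} we have $\underline{U}(\pi_k)\to \OPT(\lpu)=\Ut^*$, so $\varepsilon_k\to 0$. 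The chain $\Ut^*\ge \underline{U}(\pi_k)\ge \Ut^*-\varepsilon_k$ is exactly the defining inequality of a $\varepsilon_k$-SES, so $\{\pi_k\}$ is the desired sequence and $\pi:=\pi^*$, $\sigma:=\sigma^*$ witness the limit.

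For part~2 the middle equality is the definition of $\Ut^*$, and Theorem~\ref{theo:main} supplies $\Ut^*=\OPT(\lpu)$. Since $\overline{U}(\pi)\ge \underline{U}(\pi)$ for every $\pi$, one immediately gets $\sup_{\pi}\overline{U}(\pi)\ge \Ut^*$. For the reverse inequality I would, for an arbitrary $\pi$, pick $\bar\sigma\in\brr(\pi)$ attaining $\overline{U}(\pi)$, form the equivalent $\mu\equiv(\pi,\bar\sigma)$, note it is feasible for $\lpu$ by Lemma~\ref{lemma:mu_pi_equiv}, and conclude $\overline{U}(\pi)=V(\mu)\le \OPT(\lpu)=\Ut^*$; taking the supremum over $\pi$ yields $\sup_{\pi}\overline{U}(\pi)=\Ut^*$. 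To upgrade this supremum to a maximum I would reuse the pair $(\pi^*,\sigma^*)$ from the first part of the proof of Lemma~\ref{lemma:sequence_convergent}, where it is shown that $\sigma^*\in\overline{D}(\pi^*)$; hence $\overline{U}(\pi^*)=U(\pi^*,\sigma^*)=V(\mu^*)=\OPT(\lpu)=\Ut^*$, so the supremum is attained at $\pi^*$.

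The only step that needs genuine care is this last attainment claim: it relies on $\sigma^*$ lying in $\overline{D}(\pi^*)$ and not merely in $D(\pi^*)$, so that $U(\pi^*,\sigma^*)$ coincides with $\overline{U}(\pi^*)$ rather than with some best-response value below it. Everything else is bookkeeping with the two sandwiches $\underline{U}(\pi)\le \Ut^*\le \OPT(\lpu)$ and $\underline{U}(\pi)\le V(\mu)\le \overline{U}(\pi)$ that are already in place.
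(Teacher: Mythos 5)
Your proposal is correct and follows essentially the same route as the paper: part~1 is the paper's argument verbatim (take the sequence from Lemma~\ref{lemma:sequence_convergent}, set $\varepsilon_k=\Ut^*-\underline{U}(\pi_k)$, invoke Theorem~\ref{theo:main}), and part~2 rests on the same key fact that any $\mu\equiv(\pi,\sigma)$ with $\sigma\in\brr(\pi)$ is feasible for $\lpu$, which the paper phrases as a contradiction and you phrase directly. Your explicit treatment of the attainment of $\max_{\pi}\overline{U}(\pi)$ via $\sigma^*\in\overline{D}(\pi^*)$ is a point the paper leaves implicit, and you are right that this is where the care is needed.
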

\begin{proof} For any optimal solution $\mu^*$ of $\lpu$, consider the sequence $\{\pi_k\}$ and $\sigma^*\in D(\pi^*)$ constructed in Lemma \ref{lemma:sequence_convergent}, where $\pi^*=\lim \limits_{k\rightarrow \infty} \pi_k.$ Let $\varepsilon_k =\ut^*-\underline{U}(\pi_k)$. From Theorem \ref{theo:main} it is clear that $\varepsilon_k \rightarrow 0$ as $k\rightarrow \infty$. 
 This proves part $1$ of the corollary.

From Theorem \ref{theo:main}, it is evident that $V(\mu^*)=\Ut^*$, where $\mu^*$ solves $\lpu.$ If $\max \limits_{\pi\in \AS}\overline{U}(\pi)>\Ut^*,$ then there exists a pair of $\pi'$ and $\sigma'\in \brr(\pi')$  such that $U(\pi',\sigma')>\Ut^*$. Thus, there exists a $\mu'\equiv (\pi',\sigma')$. Accordingly, $\Ut^*=V(\mu^*)<V(\mu')$ which is a contradiction. This proves \eqref{eq:cor-2}.
\end{proof}
Eq \eqref{eq:cor-2} gives a robustness to our conclusions -- they hold under both the pessimistic (max-min) Stackelberg equilibrium and also under the optimistic (max-max) one.

\subsubsection{An illustrative example}
Theorem \ref{theo:main} provides us an alternative way for computing the SGV. A brute-force approach would require us to compute the worst case expected utility for every strategy of the sender and find the supremum over all such values to arrive at the SGV. Solving the linear program $\lpu$ is instead significantly simpler. In the following example, we illustrate this.\\
\begin{examp}\label{example1}
Let $\Xscr=\{1,2\}$ and consider a utility function $\ut_1:\Xscr\times \Xscr \rightarrow \Rbb$, $\ut_1=\begin{bmatrix}
0 & 1 \\
-1 & 0 \\
\end{bmatrix}$. Here each $(i,j)^{th}$ entry represents $\ut_1(i,j)$. Let $\Yscr=\{y_1,y_2\}$.
We will categorize the strategies in $\AS$ into two different classes based on the size of the signal space.
\begin{mycases}
    \item{$|\Yscr(\pi)|=1$:}
      Let $\Yscr(\pi)=\{y\}$ WLOG. Let $C_1$ denote the collection of $\pi$'s in this class. If $\pi \in C_1$ then $\pi(y|x)=1, \forall x\in \Xscr.$ Thus, $\brr(\pi)=\AR$. But \begin{align*}
      U(\pi,\sigma)& =\pi(y|1)\sigma(2|y)\ut(2,1)+\pi(y|2)\sigma(1|y)\ut(1,2)\\
      &=-\sigma(2|y)+\sigma(1|y).
      \end{align*}
      Therefore, $\underline{U}(\pi)=-1$. Thus, $\sup \limits_{\pi \in C_1 } \underline{U}(\pi)=-1.$

     \item{$|\Yscr(\pi)|=2$:} Let $C_2$ denote the collection of all strategies in this class. Let $\pi\in C_2$ and $\pi(y_1|1)=p$ and $\pi(y_2|2)=q$, where $p,q\in [0,1]$. The strategies in $C_2$ can be further categorized into distinct classes based on the values of $p$ and $q$.
     \end{mycases}
      \begin{enumerate}[label=(\Alph*)]
\item \textbf{$p+q=1, p,q \in (0,1)$}: Let $A$ denote the collection of all strategies in this class. Clearly for a $\pi \in A$, there exists a $\sigma \in \brr(\pi)$, where $\sigma(2|y_1)=\sigma(2|y_2)=1$. Hence $\underline{U}(\pi)=U(\pi,\sigma)=-p$. And therefore $\sup \limits_{\pi \in A } \underline{U}(\pi)=-1.$
\item  \textbf{$p+q>1,p,q \in (0,1)$}: Let $B$ denote the collection of all strategies of this class. The strict inequality ensures that every $\pi \in B$ must belong to $\Ubr$ and $\brr(\pi)=\{\sigma\},$ where $\sigma(1|y_1)=1$ and $\sigma(2|y_2)=1$. Hence, $U(\pi,\sigma)=\underline{U}(\pi)=(1-q) +(1-p)(-1)= p-q.$ Therefore,  $\sup \limits_{\pi \in B } \underline{U}(\pi)=1.$ Note that this supremum is not attained in class $B.$

\item \textbf{$p+q<1, p,q \in (0,1)$}: Proceeding in a similar manner as above, we get $\pi\in \AS^*$ and $\underline{U}(\pi)=q-p$ which gives us $\sup \limits_{\pi \in C } \underline{U}(\pi)=1.$ Note that this supremum is not attained in class $C.$
\item \textbf{$p=1,q=1$}: Clearly every symbol will get recovered correctly in the unique best response strategy. Therefore, $\sup \limits_{\pi \in D } \underline{U}(\pi)=0.$
\item \textbf{$p=1,q\in [0,1)$}: Since $|\Yscr(\pi)|=2$, we must have $q>0$ which makes every $\pi\in \Ubr$. For a $\pi$ in this class, we have $\underline{U}(\pi)=1-q$. Thus, $\sup \limits_{\pi \in E} \underline{U}(\pi)=1.$ Note that this supremum is not attained in class $E.$
\item \textbf{$p\in [0,1),q=1$}: Since $|\Yscr(\pi)|=2$, we must have $p>0$ which makes $\pi \in \Ubr$. If $\pi$ is from this class then $\underline{U}(\pi)=p-1$. Thus, $\sup \limits_{\pi \in F} \underline{U}(\pi)=0.$
\item \textbf{$p\in [0,1),q\in[0,1)$}: Notice that every $\pi\in G$ belongs to $\Ubr$. Further, we have
\begin{enumerate}
\item \textbf{$p=0,q\in(0,1)$}:  $\underline{U}(\pi)=q$, which gives us $\sup \limits_{\pi \in G(a)} \underline{U}(\pi)=1.$ Note that this supremum is not attained in class $G(a).$
\item \textbf{$p=0,q=0$}: $\underline{U}(\pi)=0$, which gives us $\sup \limits_{\pi \in G(b)} \underline{U}(\pi)=0.$

\item \textbf{$p\in(0,1),q=0$}: $\underline{U}(\pi)=-p$, which gives us $\sup \limits_{\pi \in G(c)} \underline{U}(\pi)=0.$
\end{enumerate}

\end{enumerate}
Therefore, we can conclude that $\ut_1 ^*=1$. Clearly, if a SES exists then it must belong to one of the four classes in $\{B,C,E,G(a)\}$. But we have seen that the supremum is not attained in any of these classes. Therefore, a SES does not exist for $\ut_1$.

Next we compute $\ut_1^*$ using $\textbf{P}(\ut_1)$. Let $r=\mu(1|1)$ and $s=\mu(2|2)$. Notice that by plugging in the values of utility in $\textbf{P}(\ut_1)$, the linear program is transformed to the following LP, where
\begin{equation}
\begin{gathered}
 \mathbf{P}(\ut_1):\max_{(r,s)}  \quad \quad \quad \quad \quad \quad   s-r  \quad \quad \\
 \quad \quad \quad \quad \text{s.t.} r\geq 1-s,0\leq r,s\leq 1.
\end{gathered}
\end{equation}
It is easy to see that $\OPT(\textbf{P}(\ut_1))=1 $ which is attained  under the unique solution $(\mu^*(1|1)=r=1,\mu^*(2|1)=1-r=0,\mu^*(1|2)=1-s=1,\mu^*(2|2)=s=0)$. Therefore, $\OPT(\textbf{P}(\ut_1))=1=\ut_1 ^*.$
Clearly, the LP gives the $SGV$ far more easily compared to the above calculations.

Another benefit of solving the LP is that using the optimal solution of the LP we can construct a \ese \quad if no strategy of the sender attains the SGV (like this example). By glancing at our $\mu^*$, we know that the sender can benefit by making the receiver recover all symbols as $1$. Using $\mu^*$, we construct a sequence of \ese, $\{\pie\}_{k}$, with $\varepsilon=\frac{0.1}{k}$ for every $\pie$, where $\Yscr(\pie)=\{y_1,y_2\}$ and $\pie(y_1|1)=1,\pie(y_1|2)=1-\frac{0.1}{k}, \pie(y_2|2)=\frac{0.1}{k}$. Notice that $\pie$ belongs to class $E$ for all $k$ and thus $\pie\in \Ubr,\forall k$. Hence, $\underline{U}(\pie)=1-\frac{0.1}{k}$, implying $\lim \limits_{k \rightarrow \infty} \underline{U}(\pie)\rightarrow 1=\ut_1^*.$ Similar \ese \ strategies can be constructed in classes $B,C$ and $G(a)$ above.
\end{examp}
\subsubsection{Interpretation of $\lpu$ as an assignment problem}
The linear program $\lpu$ can also be viewed as an assignment problem~\cite{burkard2012assignment} with additional constraints. 
Let $\Xscr$ denote the set of $q$ tasks and also a set of $q$ agents, where every agent is allocated $q$ tasks across its working hours. Let $\mu$ denote a probability distribution where $\mu(\bullet|x)$ is the portion of working hours of agent $x$ being allocated to perform the task $\bullet \in \Xscr$. 
Another constraint imposed on a feasible allocation policy is that the portion of working hours of agent $x$ allocated for performing the task $\xhat$ can never exceed the portion of working hours of agent $\xhat $ allocated on the task $\xhat$. This is precisely the trust constraint. A reward value of $\mu(\xhat|x)\ut(\xhat,x)$ is obtained if $x$ allocates $\mu(\xhat|x)$ towards $\xhat$, where $\ut:\Xscr\times \Xscr \longrightarrow \Rbb$. Thus, for any $\ut$, the objective of this assignment problem is to find an allocation policy $\mu \in \Pscr(\Xscr|\Xscr)$ for every $x\in \Xscr$ which maximizes the expected reward, \ie,  maximizes $\frac{1}{q}\sum \limits_{x,\xhat\in \Xscr}\mu(\xhat|x)\ut(\xhat,x).$
We would like to emphasize that we are unaware of any assignment problem of this structure studied in the literature.
\section{Informativeness of the sender}\label{sec4}
This section is dedicated to quantifying the minimum amount of information revealed by the sender at equilibrium. We call this quantity the \textit{informativeness} of the utility function of the sender. We show that informativeness can be characterized by a linear program and that loss of information is imminent if there is a misalignment of interest between the players for at least one symbol. Additionally, we also show that informativeness has a linear relationship with the SGV for structured utility functions.

In our previous work \cite{deori2022information}, we studied a similar setting where the players were constrained to play deterministic strategies.
We defined informativeness denoted by $\Iscr(\ut)$ as the minimum the number of symbols that are correctly recovered by the receiver in any equilibrium. We showed that $\Iscr(\ut)$ is given the \textit{vertex clique cover number} of the strong sender graph graph defined below. \begin{definition}(Strong sender graph)
$G_s(\ut)=(\Xscr,E)$ is the strong sender graph of a utility function of the sender $\ut$ where $(x,x')\in E$ if $\ut(x,x')\geq 0$ and $\ut(x'x)\geq 0$.
\end{definition}
Thus, from \cite[Theorem 4.1]{deori2022information}, we have \begin{equation}
\Iscr(\ut)=\theta_v(G_s(\ut)),
\end{equation}
where $\theta_v(G)$ is the vertex clique cover number of any graph $G$.

We define informativeness for our setting in the following way.
\begin{definition}(Informativeness of the utility function) Let $\{\vpie\}_{\varepsilon}$ denote a sequence of \ese \quad  such that $\lim \limits_{\varepsilon \rightarrow 0} \underline{U}(\vpie)=\Ut^*$. The informativeness of a utility function of the sender $\ut$, denoted by $\Ninfo(\ut)$ is defined as:
\begin{equation}\label{eq:info}
\begin{aligned}
\Ninfo(\ut):= \inf \limits_{\{\vpie\}_{\varepsilon}} & \liminf_{\varepsilon \rightarrow 0} \Rscr(\vpie, \sigma_{\varepsilon}),\text{where}\sigma_{\varepsilon} \in \brr(\vpie). \text{\hspace{-3mm}\hspace{-3mm}}
\end{aligned}
\end{equation}
\end{definition}
First, we explain our choice of definition for measuring the minimum information recovered or revealed at equilibrium. Example \ref{example1} illustrates a utility function where no strategy of the sender attained the SGV. But as demonstrated in Theorem \ref{theo:main}, there exist a sequence of \ese,\ whose worst case expected utility approaches the SGV as $\varepsilon \rightarrow 0.$ Thus, it is sensible to measure the minimum information revealed at equilibrium by studying such sequences. Now there could exist a divergent sequence of \ese\ $\{\vpie\}$, for which $\lim \limits_{\varepsilon \rightarrow 0} \underline{U}(\vpie)=\Ut^*$. 
As a consequence $\{\Rscr(\pi_{\varepsilon},\sigma_{\varepsilon})\}$ could also be divergent. But since these sequences,  $\{\pi_{\varepsilon}\}$ and $\{\Rscr(\pi_{\varepsilon},\sigma_{\varepsilon})\}$ are bounded, they have (possibly multiple) accumulation points. Consequently, we use the $\liminf$ to capture the smallest value of the accumulation points of $\{\Rscr(\vpie,\sigma_{\varepsilon})\}$ as $\varepsilon \rightarrow 0.$ Additionally note that $\Rscr(\vpie,\sigma_{\varepsilon})$ is constant for all $\sigma_{\varepsilon} \in \brr(\vpie)$, hence the choice of the specific $\sigma_{\varepsilon}\in \brr(\vpie)$ does not matter. 
Therefore, the infimum of $\liminf \limits_{\varepsilon \rightarrow 0} \Rscr(\vpie, \sigma_{\varepsilon})$ when taken over all such sequences $\{\vpie\}$ of \ese\ strategies quantifies the minimum information correctly recovered at equilibrium.

 In this section we show that every accumulation point of a sequence of \ese \quad is equivalent to some optimal solution of $\lpu$. As a consequence, we show that informativeness is given by $\min \limits_{\mu^* } \sum \limits_{x\in \Xscr}\mu^*(x|x)$, where the minimum is over all optimal solutions $\mu^*$ of $\lpu.$ This implies that the informativeness of a utility function can also be computed by an LP. We prove these claims by first proving a few additional results on converging \ese \ strategies.
\begin{lemma}
Let $\{\vpie\}_{\varepsilon>0}$ be a sequence of \ese. If $\{\vpie\} \xrightarrow{\varepsilon \rightarrow 0}\pi$, then there exists a subsequence $\{\vpie\}_{\varepsilon \in S}$ such that $\Yscr(\pi)\subseteq \Yscr(\vpie), \forall \varepsilon \in S$.
\end{lemma}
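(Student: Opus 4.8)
The plan is to use nothing beyond the finiteness of the signal alphabet $\Yscr$ together with the entrywise convergence $\vpie \to \pi$; in particular the $\varepsilon$-SES property is not needed, so the statement is really a topological fact about convergent sequences in $\AS$. First I would record that, since $\AS=\Pscr(\Xscr|\Yscr)$ is a product of simplices and hence closed, the limit $\pi$ lies in $\AS$, so $\Yscr(\pi)$ is well defined (indeed $\Yscr(\cdot)$ makes sense for any $\pi$, but this keeps things clean).

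Next, for each signal $y \in \Yscr(\pi)$, the definition of $\Yscr(\pi)$ furnishes a symbol $x_y \in \Xscr$ with $\pi(y|x_y)>0$. Since $\vpie(y|x_y)\to \pi(y|x_y)>0$ as $\varepsilon\to 0$, there is a threshold $\eta_y>0$ such that $\vpie(y|x_y)>0$ — equivalently $y\in\Yscr(\vpie)$ — for every index $\varepsilon$ of the sequence with $\varepsilon<\eta_y$. The one step requiring a little care is the uniformization: the thresholds $\eta_y$ depend on $y$, but because $\Yscr(\pi)\subseteq\Yscr$ is finite we may set $\eta:=\min_{y\in\Yscr(\pi)}\eta_y>0$, a single threshold valid for all $y\in\Yscr(\pi)$ simultaneously. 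Taking $S:=\{\varepsilon:\varepsilon<\eta\}$ (intersected with the index set of the sequence), $S$ is infinite and for every $\varepsilon\in S$ and every $y\in\Yscr(\pi)$ we get $y\in\Yscr(\vpie)$, i.e. $\Yscr(\pi)\subseteq\Yscr(\vpie)$. This $S$ is the desired subsequence — in fact a cofinite tail.

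The only obstacle worth flagging is precisely that uniformization, which is why the lemma asks merely for a subsequence: pointwise convergence alone yields, for each $y$ separately, that $y$ is eventually used by $\vpie$, and one invokes $|\Yscr|<\infty$ to collapse these into one threshold. If the index set of $\{\vpie\}$ were a genuine net rather than a sequence, the argument would still go through verbatim, so this lemma is robust to how one formalizes ``sequence of $\varepsilon$-SES with $\varepsilon\to 0$''.
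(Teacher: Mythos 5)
Your proof is correct and rests on the same two ingredients as the paper's own argument — pointwise convergence of $\vpie(y|x)$ to $\pi(y|x)>0$ and finiteness of $\Yscr$ — the only difference being that you argue directly (and in fact obtain a cofinite tail), whereas the paper argues by contradiction along a subsequence on which some fixed $y\in\Yscr(\pi)$ is never used. This is essentially the same proof.
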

\begin{proof} We prove by contradiction. Suppose there is a sequence $S$ of $\varepsilon \rightarrow 0$ such that $\Yscr(\pi)\nsubseteq \Yscr(\vpie), \forall \varepsilon \in S$. Then there exists a $y\in \Yscr(\pi)$ such that $y \notin \Yscr(\vpie), \forall \varepsilon \in S$. Since $y \in \Yscr(\pi)$, there exists a $x\in \Xscr$ such that $\pi(y|x)>0.$ Additionally, note that since $y \notin \Yscr(\vpie),$ therefore $\vpie(y|x)=0, \forall \varepsilon \in S$. But $\lim \limits_{\varepsilon \rightarrow 0} \vpie(y|x)=0\neq \pi(y|x)$, which is a contradiction since $\{\vpie(y|x)\}\xrightarrow{\varepsilon \rightarrow 0}\pi(y|x)$. 
\end{proof}
Given any pair of $\pi \in \AS$ and $\sigma \in \brr(\pi)$, let \begin{equation}
K(\pi,\sigma):=\{(x,y) \in \Xscr \times \Yscr| y\in \Yscr(\pi), \sigma(x|y)=1\}.
\end{equation}
Clearly if $\sigma \in \overline{D}(\pi)$, then $K(\pi,\sigma)$ is never empty.
\begin{proposition}
Let $\{\vpie\}_{\varepsilon>0}$ be a sequence of \ese \ such that $\{\vpie\}\xrightarrow{\varepsilon \rightarrow 0}\pi$. If $\sigma \in \overline{D}(\pi)$ then there exists a  subsequence $\{\sigma_{\varepsilon}\}_{\varepsilon \in S}$ such that $\sigma_{\varepsilon} \in \overline{D}(\vpie)$ and $K(\pi,\sigma) \subseteq K(\vpie,\sigma_{\varepsilon}),\forall \varepsilon \in S$.
\end{proposition}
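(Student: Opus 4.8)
The plan is to prove the proposition by a selection‑and‑limit argument: freeze the combinatorial data of the $\varepsilon$‑SES along a subsequence, extract the limiting deterministic decoder, and then certify via the defining inequality of an \ese\ (together with Corollary~\ref{cor:main_theo}) that this decoder attains the best‑case value at $\pi$.

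First I would apply the preceding lemma to pass to a subsequence of $\{\vpie\}$ along which $\Yscr(\pi)\subseteq\Yscr(\vpie)$; this already secures the ``$y\in\Yscr(\vpie)$'' half of the definition of $K(\vpie,\sigma_\varepsilon)$. Along this subsequence pick, for each $\varepsilon$, some $\sigma_\varepsilon\in\overline D(\vpie)$, which is nonempty by Lemma~\ref{Lemma:determisiticBR}. Because $\Xscr$ and $\Yscr$ are finite, there are only finitely many possibilities for the pair (signal set $\Yscr(\vpie)$, deterministic decoder $\sigma_\varepsilon$ restricted to $\Yscr(\vpie)$), so after one more refinement to a subsequence $S$ I may assume that $\Yscr(\vpie)$ equals a fixed set $Y_0\supseteq\Yscr(\pi)$ and that $\sigma_\varepsilon|_{Y_0}$ equals a fixed deterministic map $\widehat\sigma$ for all $\varepsilon\in S$; write $\widehat\sigma(y)$ for the symbol to which $\widehat\sigma$ sends $y$.

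The core step is to show that $\widehat\sigma$ (extended arbitrarily to a member of $D(\pi)$, which is harmless since $\pi$ gives zero weight to signals outside $\Yscr(\pi)$) lies in $\overline D(\pi)$. For $y\in\Yscr(\pi)$, $\sigma_\varepsilon\in\brr(\vpie)$ and \eqref{eq:brr_eq} give $\vpie(y|\widehat\sigma(y))\ge\vpie(y|x')$ for all $x'$; letting $\varepsilon\to0$ along $S$ with $\vpie\to\pi$ yields $\pi(y|\widehat\sigma(y))\ge\pi(y|x')$, so $\widehat\sigma\in\brr(\pi)$. Next, $U(\vpie,\sigma_\varepsilon)=\sum_{x}\sum_{y\in Y_0}\vpie(y|x)\,\ut(\widehat\sigma(y),x)$; splitting the inner sum over $y\in\Yscr(\pi)$ (where $\vpie(y|x)\to\pi(y|x)$ and $\widehat\sigma$ is frozen) and $y\in Y_0\setminus\Yscr(\pi)$ (where $\pi(y|x)=0$, hence $\vpie(y|x)\to0$) shows $U(\vpie,\sigma_\varepsilon)\to U(\pi,\widehat\sigma)$. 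Since $\sigma_\varepsilon\in\overline D(\vpie)$ and $\vpie$ is an \ese, $U(\vpie,\sigma_\varepsilon)=\overline U(\vpie)\ge\underline U(\vpie)\ge\Ut^*-\varepsilon$, so $U(\pi,\widehat\sigma)\ge\Ut^*$; and $U(\pi,\widehat\sigma)\le\overline U(\pi)\le\Ut^*$ by \eqref{eq:cor-2}. Hence $U(\pi,\widehat\sigma)=\overline U(\pi)=\Ut^*$, i.e.\ $\widehat\sigma\in\overline D(\pi)$, and by construction $K(\pi,\widehat\sigma)\subseteq K(\vpie,\sigma_\varepsilon)$ for every $\varepsilon\in S$.

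The remaining step, which I expect to be the main obstacle, is to carry out this selection so that the $\sigma_\varepsilon$'s agree with the prescribed $\sigma\in\overline D(\pi)$ on $\Yscr(\pi)$, not merely with some member of $\overline D(\pi)$. When $\sigma$ decodes each $y\in\Yscr(\pi)$ to the \emph{strict} maximizer of $\pi(y|\cdot)$, the convergence $\vpie\to\pi$ forces $\arg\max_{x'}\vpie(y|x')$ to be that same singleton for all small $\varepsilon$, so \eqref{eq:brr_eq} compels \emph{every} $\sigma_\varepsilon\in\brr(\vpie)$ to match $\sigma$ at $y$, and the compactness step then produces $\sigma_\varepsilon\in\overline D(\vpie)$ with the required agreement. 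The difficulty is a tie in $\pi(y|\cdot)$ at some $y\in\Yscr(\pi)$: then $\vpie$ may rank another tied symbol strictly above $\sigma$'s choice for all $\varepsilon$, and one has to use that $\vpie$ is an \ese\ --- whence $\overline U(\vpie)\to\Ut^*$ and, by the value computation of the core step, the tied decodings a BCEU‑optimal $\sigma_\varepsilon$ can adopt are constrained --- to pin down, along a subsequence, a BCEU‑optimal $\sigma_\varepsilon$ consistent with $\sigma$. Coordinating this tie‑breaking simultaneously over all ambiguous signals, uniformly in $\varepsilon$, is the delicate part of the proof.
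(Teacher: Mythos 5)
Your ``core step'' is sound as far as it goes: after passing to a subsequence on which $\Yscr(\pi)\subseteq\Yscr(\vpie)=Y_0$ and the restriction of $\sigma_\varepsilon\in\overline D(\vpie)$ to $Y_0$ is a frozen map $\widehat\sigma$, the chain $U(\vpie,\sigma_\varepsilon)=\overline U(\vpie)\ge\underline U(\vpie)\ge\Ut^*-\varepsilon$ together with $U(\vpie,\sigma_\varepsilon)\to U(\pi,\widehat\sigma)$ and \eqref{eq:cor-2} does force $U(\pi,\widehat\sigma)=\overline U(\pi)=\Ut^*$, so \emph{some} $\widehat\sigma\in\overline D(\pi)$ satisfies $K(\pi,\widehat\sigma)\subseteq K(\vpie,\sigma_\varepsilon)$ along the subsequence. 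This is a clean, quantitative route, and it differs from the paper's, which argues by contradiction on the \emph{given} $\sigma$ from the outset: assuming the inclusion fails for every $\sigma_\varepsilon\in\overline D(\vpie)$ and every $\varepsilon$ in some subsequence, it extracts by finiteness a fixed signal $y$ and fixed symbols $x$ with $(x,y)\in K(\pi,\sigma)$ and $x'\neq x$ with $(x',y)\in K(\vpie,\sigma_\varepsilon)$, reads off the inequalities $\pi(y|x)>\pi(y|x')$ and $\vpie(y|x')>\vpie(y|x)$ from \eqref{eq:brr_eq}, and contradicts them in the limit $\vpie\to\pi$.

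The genuine gap is the one you flag yourself: the proposition asserts the conclusion for an \emph{arbitrary prescribed} $\sigma\in\overline D(\pi)$, while your construction only delivers it for whatever $\widehat\sigma$ the compactness argument happens to select. The two coincide automatically only when $\pi(y|\cdot)$ has a strict maximizer at every $y\in\Yscr(\pi)$; the tie case is precisely the case in which $\overline D(\pi)$ can contain several distinct decoders on $\Yscr(\pi)$, i.e.\ exactly when the distinction matters. Your paragraph on that case describes the obstacle (``the tied decodings a BCEU-optimal $\sigma_\varepsilon$ can adopt are constrained,'' ``coordinating this tie-breaking simultaneously over all ambiguous signals'') but supplies no mechanism for producing a $\sigma_\varepsilon\in\overline D(\vpie)$ that agrees with $\sigma$ at a tied signal when $\vpie$ ranks another tied symbol strictly highest for every $\varepsilon$ --- which is the very situation the paper's contradiction argument is designed to rule out. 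As written, the proposal proves only the weaker ``there exists $\sigma\in\overline D(\pi)$ such that\dots'' version; to prove the stated proposition you must either close the tie-breaking step or restructure the argument around the prescribed $\sigma$ as the paper does.
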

\begin{proof} We prove this by contradiction. First note that $\overline{D}(\pi)$ is non-empty for all $\pi\in \AS$ by Lemma \ref{Lemma:determisiticBR}. Suppose there is a sequence $S$ of $\varepsilon \rightarrow 0$ such that \begin{equation}\label{eq:a1}
K(\pi,\sigma) \nsubseteq K(\vpie,\sigma_{\varepsilon}), \quad \forall \sigma_{\varepsilon}\in \overline{D}(\vpie),\forall \varepsilon \in S.
\end{equation} In the previous lemma, we proved the existence of a subsequence $\{\vpie\}_{\varepsilon \in S'}$ such that $\Yscr(\pi)\subseteq \Yscr(\vpie), \forall \varepsilon \in S'$. Therefore, there exists a  $y\in \Yscr$ and a pair of $x,x_{\varepsilon}'\in \Xscr,\forall \varepsilon \in S'$ such that $(x,y) \in K(\pi, \sigma)$ and $(x_{\varepsilon}',y) \in K(\vpie,\sigma_{\varepsilon}),\forall \varepsilon \in S'$. 
Since $x$ and $x_{\varepsilon}'$ are  elements of $\Xscr$ which is finite in size, therefore there exists a $\{\vpie\}_{\varepsilon \in S''}$, where $S''\subseteq S'$ and a fixed $x'\neq x$ such that
\begin{equation}\label{eq:a2}
\begin{split}
(x',y)\in K(\pi_{\varepsilon},\sigma_{\varepsilon}),
(x,y)\in K(\pi,\sigma),
 \forall \varepsilon \in S''.
\end{split}
\end{equation} 

 Observe that \eqref{eq:a1}
 and \eqref{eq:a2} hold if and only  if \begin{equation}\label{eq:assumption_conse}
\begin{split}
\pi(y|x)>\pi(y|x') \text{and}\vpie (y|x')> \vpie (y|x), \forall \varepsilon \in S''.
\end{split}
\end{equation} 
But note that for the sequence $\{\vpie\}_{\varepsilon \in S''}, \lim \limits_{\varepsilon \rightarrow 0} \vpie (y|x') \geq \lim \limits_{\varepsilon \rightarrow 0} \vpie (y|x) \implies \pi(y|x') \geq \pi(y|x)$,
which is a contradiction to \eqref{eq:assumption_conse}. This proves our proposition.
\end{proof}
In the following proposition, we show that if  $\varepsilon \rightarrow 0$ then any convergent sequence of \ese\ as  $\varepsilon \rightarrow 0$ must converge to a strategy which is equivalent to an optimal solution of $\lpu$.
\begin{theorem}\label{theo:mu_pi}
For any $\ut$, let $\vpie$ be an \ese \ strategy, where $\varepsilon>0$. If $\{\vpie\}\xrightarrow{\varepsilon \rightarrow 0}\pi$ then there exists a $\mu^* \equiv (\pi,\sigma)$ where $\mu^*$ is an optimal solution of $\lpu$.
\end{theorem}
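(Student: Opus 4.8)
The plan is to establish the single equality $\overline{U}(\pi)=\Ut^{*}$; the theorem then falls out at once. Indeed, by \eqref{eq:cor-2} we always have $\overline{U}(\pi)\le \max_{\pi'\in\AS}\overline{U}(\pi')=\Ut^{*}$, so only the reverse inequality needs work. Granting $\overline{U}(\pi)=\Ut^{*}$, fix any $\sigma\in\overline{D}(\pi)$ (nonempty by Lemma \ref{Lemma:determisiticBR}) and let $\mu^{*}\equiv(\pi,\sigma)$ via \eqref{eq:mu_def}. Then $V(\mu^{*})=U(\pi,\sigma)=\overline{U}(\pi)=\Ut^{*}=\OPT(\lpu)$ by Theorem \ref{theo:main}. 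Since $\mu^{*}\in\Pscr(\Xscr|\Xscr)$ and, by Lemma \ref{lemma:mu_pi_equiv}, $\mu^{*}$ satisfies the trust constraints \eqref{eq:trustdef}, it is feasible for $\lpu$; attaining the optimal value, it is an optimal solution, with $\sigma\in\overline{D}(\pi)\subseteq D(\pi)$ as desired.

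To prove $\overline{U}(\pi)\ge\Ut^{*}$ I would run a limiting argument along the sequence. Since each $\vpie$ is an \ese, the sandwich $\Ut^{*}\ge\underline{U}(\vpie)\ge\Ut^{*}-\varepsilon$ gives $\underline{U}(\vpie)\to\Ut^{*}$. Fix $\sigma\in\overline{D}(\pi)$; applying the proposition established just above, pass to a subsequence $S$ and choose $\sigma_{\varepsilon}\in\overline{D}(\vpie)$ with $K(\pi,\sigma)\subseteq K(\vpie,\sigma_{\varepsilon})$ for all $\varepsilon\in S$. Because elements of $D(\pi)$ and $D(\vpie)$ are deterministic at \emph{every} signal, this containment forces $\Yscr(\pi)\subseteq\Yscr(\vpie)$ and $\sigma_{\varepsilon}(\cdot|y)=\sigma(\cdot|y)$ for every $y\in\Yscr(\pi)$. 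Writing $\mu_{\varepsilon}\equiv(\vpie,\sigma_{\varepsilon})$, split $\mu_{\varepsilon}(\xhat|x)=\sum_{y\in\Yscr(\pi)}\vpie(y|x)\sigma_{\varepsilon}(\xhat|y)+\sum_{y\notin\Yscr(\pi)}\vpie(y|x)\sigma_{\varepsilon}(\xhat|y)$. In the first (finite) sum $\sigma_{\varepsilon}(\xhat|y)=\sigma(\xhat|y)$ and $\vpie(y|x)\to\pi(y|x)$, so it converges to $\sum_{y\in\Yscr(\pi)}\pi(y|x)\sigma(\xhat|y)=\mu^{*}(\xhat|x)$; every term of the second sum vanishes since $\vpie(y|x)\to\pi(y|x)=0$ for $y\notin\Yscr(\pi)$. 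Hence $\mu_{\varepsilon}\to\mu^{*}$ coordinatewise, and, $V$ being linear, $\overline{U}(\vpie)=V(\mu_{\varepsilon})\to V(\mu^{*})=\overline{U}(\pi)$.

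Finally, combining $\underline{U}(\vpie)\le\overline{U}(\vpie)$ with the two limits along $S$ yields $\Ut^{*}=\lim_{\varepsilon\to 0}\underline{U}(\vpie)\le\lim_{\varepsilon\to 0,\,\varepsilon\in S}\overline{U}(\vpie)=\overline{U}(\pi)$, the remaining inequality. I expect the main obstacle to be precisely the one already tamed by the lemma and proposition above: the best-response correspondence is not continuous at $\pi$, so best responses to $\vpie$ need not converge to a best response of $\pi$, and $\vpie$ may spread mass onto signals outside $\Yscr(\pi)$. Using the proposition to select $\sigma_{\varepsilon}$'s that agree with a fixed $\sigma\in\overline{D}(\pi)$ on $\Yscr(\pi)$ is what legitimizes the passage to the limit; everything else reduces to continuity of the finite, linear maps $(\pi,\sigma)\mapsto\mu$ and $\mu\mapsto V(\mu)$.
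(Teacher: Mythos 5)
Your proposal is correct and follows essentially the same route as the paper's own proof: both hinge on the preceding proposition to extract a subsequence with $\sigma_{\varepsilon}\in\overline{D}(\vpie)$ agreeing with a fixed $\sigma\in\overline{D}(\pi)$ on $\Yscr(\pi)$, pass to the limit using $\pi(y|x)=0$ off $\Yscr(\pi)$, and sandwich via $\underline{U}(\vpie)\le U(\vpie,\sigma_{\varepsilon})\le \Ut^{*}$ together with feasibility of $\mu^{*}$ from the trust-constraint lemma. Your reorganization through the intermediate equality $\overline{U}(\pi)=\Ut^{*}$ is only a cosmetic repackaging of the same argument.
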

\begin{proof}
The previous proposition guarantees that for any $\sigma \in \overline{D}(\pi)$, there exists a subsequence $\{\vpie\}_{\varepsilon \in S}$ such that $K(\pi, \sigma) \subseteq K(\pi_{\varepsilon},\sigma_{\varepsilon})$, where $\sigma_{\varepsilon} \in \overline{D}(\vpie), \forall \varepsilon \in S$.  Thus, $\sigma_{\varepsilon} (\bullet|y)=\sigma(\bullet|y),\forall y\in \Yscr(\pi),\forall \varepsilon \in S$ and $\lim \limits_{\varepsilon \rightarrow 0} \vpie(y|x) =\pi(y|x),\forall x \in \Xscr,\forall y\in \Yscr$. Also note that $\pi(y|x)=0,\forall x \in \Xscr,\forall y\in \Yscr'$, where $\Yscr'= \Yscr \backslash \Yscr(\pi)$. 
Accordingly, writing $\Yscr=\Yscr(\pi)\cup \Yscr'$, we get
\begin{align*}
\lim \limits_{\varepsilon \rightarrow 0} U(\vpie,\sigma_{\varepsilon})
&=\sum \limits_{x\in \Xscr}\sum \limits_{\xhat \in \Xscr} \sum \limits_{y \in \Yscr(\pi)}\pi(y|x) \sigma(\xhat|y)\ut(\xhat,x)\\
&=V(\mu^*),
\end{align*}
where $\mu^*\equiv(\pi,\sigma)$. Since $\sigma_{\varepsilon} \in \overline{D}(\vpie)$, we have $\underline{U} (\pi_{\varepsilon}) \leq U(\pi_{\varepsilon},\sigma_{\varepsilon}) \leq \Ut^*$ which follows from Corollary \ref{cor:main_theo}. Thus, letting $\varepsilon \rightarrow 0$, we get $V(\mu^*)=\Ut^*$. Therefore, we can conclude that $\mu^*$ is  an optimal solution of $\lpu$.
\end{proof}

Next, we will show that the informativeness of the utility function of the sender is exactly equal to $\OPT(\mathbf{I}(\ut))$ where $\mathbf{I}(\ut)$ is  defined as the following optimization problem,
\begin{equation}
\Ibf(\ut): \quad \min \limits_{\mu^* \in O^*} \sum \limits_{x\in \Xscr} \mu^*(x|x),
\end{equation}
where $O^*:=\{\mu^*|\mu^* \text{is an optimal solution of} \lpu\}$. $\Ibf(\ut)$ can also be expressed as an LP using LP duality. Let $\Cscr$ denote  the collection of all $(\mu,w,v)$ which satisfy the constraints of $\lpu$ and $\dpu$; and the additional constraint: $\sum \limits_{x \in \Xscr} w(x)=V(\mu).$ Thus, $\Ibf(\ut)$ can be expressed as
\begin{equation}
\begin{aligned}
\Ibf(\ut):\min_{(\mu,w,v)} \quad & \sum \limits_{x}\mu(x|x)\\
\textrm{s.t.} \quad & (\mu,w,v) \in \Cscr,
\end{aligned}
\end{equation}
which is a linear program.
\begin{theorem}\label{theorem:LandI}
For any $\ut$, $\Ninfo(\ut)=\OPT(\Ibf(\ut)).$
\end{theorem}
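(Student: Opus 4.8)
The plan is to prove the two inequalities $\Ninfo(\ut)\le\OPT(\Ibf(\ut))$ and $\Ninfo(\ut)\ge\OPT(\Ibf(\ut))$ separately. Two elementary facts are used throughout. First, whenever $\mu\equiv(\pi,\sigma)$ we have $\Rscr(\pi,\sigma)=\sum_{x\in\Xscr}\mu(x|x)$, since by definition $\mu(x|x)=\sum_{y}\pi(y|x)\sigma(x|y)$. Second, for $\sigma\in\brr(\pi)$ the quantity $\Rscr(\pi,\sigma)=\sum_{y\in\Yscr(\pi)}\max_{x}\pi(y|x)$ depends only on $\pi$ (this is the remark, already used in the definition, that $\Rscr(\vpie,\sigma_\varepsilon)$ is independent of the choice of $\sigma_\varepsilon\in\brr(\vpie)$). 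Note also that $O^*$ is a nonempty compact set, so $\OPT(\Ibf(\ut))$ is attained by some $\mu^*\in O^*$.

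\textbf{The inequality $\Ninfo(\ut)\le\OPT(\Ibf(\ut))$.} Here I would exhibit a single admissible sequence achieving the bound. Fix $\mu^*\in O^*$ with $\sum_{x}\mu^*(x|x)=\OPT(\Ibf(\ut))$, and let $\{\pi_k\}$ be the sequence of $\varepsilon_k$-SES produced from this $\mu^*$ by Lemma~\ref{lemma:sequence_convergent} and Corollary~\ref{cor:main_theo}(1): then $\varepsilon_k\to0$, $\pi_k\to\pi$, and $\mu^*\equiv(\pi,\sigma)$ for some $\sigma\in D(\pi)$, so $\{\pi_k\}$ is a legitimate competitor in the infimum defining $\Ninfo(\ut)$. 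It then remains to check that $\Rscr(\pi_k,\sigma_k)\to\sum_{x}\mu^*(x|x)$ for the explicit $\pi_k,\sigma_k$ of Lemma~\ref{lemma:sequence_convergent}. Since $\sigma_k(i|y_i)=1$ for every $y_i\in\Yscr(\pi_k)$, one gets $\Rscr(\pi_k,\sigma_k)=\sum_{i\in\Xhat(\mu^*)}\pi_k(y_i|i)$ plus the total mass placed on the newly introduced signals $\Yscr'(\pi^*)$, and the latter is $O(1/k)$; moreover $\pi_k(y_i|i)\to\pi^*(y_i|i)=\mu^*(i|i)$ directly from the construction, while $\mu^*(x|x)=0$ for $x\notin\Xhat(\mu^*)$. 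Hence $\liminf_{\varepsilon\to0}\Rscr(\vpie,\sigma_\varepsilon)=\sum_{x}\mu^*(x|x)=\OPT(\Ibf(\ut))$ along this sequence, which gives the inequality.

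\textbf{The inequality $\Ninfo(\ut)\ge\OPT(\Ibf(\ut))$.} Let $\{\vpie\}$ be an arbitrary sequence of $\varepsilon$-SES with $\varepsilon\to0$ (so $\underline{U}(\vpie)\to\Ut^*$ automatically), and put $L:=\liminf_{\varepsilon\to0}\Rscr(\vpie,\sigma_\varepsilon)$, a finite number since $\Rscr\in[0,q]$. Pass to a subsequence along which $\Rscr(\vpie,\sigma_\varepsilon)\to L$; by compactness of $\AS$, pass to a further subsequence along which $\vpie\to\pi$ for some $\pi\in\AS$. Theorem~\ref{theo:mu_pi} then supplies $\mu^*\equiv(\pi,\sigma)$ with $\mu^*$ optimal for $\lpu$, i.e.\ $\mu^*\in O^*$; and its proof gives a still-further subsequence $S$ on which $\sigma_\varepsilon(\cdot|y)=\sigma(\cdot|y)$ for all $y\in\Yscr(\pi)$, while $\vpie(y|x)\to\pi(y|x)$ with $\pi(y|x)=0$ for $y\notin\Yscr(\pi)$. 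Consequently $\Rscr(\vpie,\sigma_\varepsilon)=\sum_{x,y}\vpie(y|x)\sigma_\varepsilon(x|y)$ splits into the part over $y\in\Yscr(\pi)$, which tends to $\sum_{x}\sum_{y\in\Yscr(\pi)}\pi(y|x)\sigma(x|y)=\sum_x\mu^*(x|x)$, and the part over $y\notin\Yscr(\pi)$, which is bounded by $\sum_{y\notin\Yscr(\pi),x}\vpie(y|x)\to0$. Since $S$ is an infinite subsequence of a sequence converging to $L$, we get $L=\sum_x\mu^*(x|x)\ge\OPT(\Ibf(\ut))$; taking the infimum over $\{\vpie\}$ yields $\Ninfo(\ut)\ge\OPT(\Ibf(\ut))$, and combining with the previous paragraph completes the proof.

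\textbf{Where the difficulty lies.} The delicate direction is the lower bound. A sequence of $\varepsilon$-SES need not converge, and even along a convergent subsequence the strategies $\vpie$ may carry small but positive probability on signals outside $\Yscr(\pi)$ and may have strictly larger signal support than $\pi$, so $\lim\Rscr(\vpie,\sigma_\varepsilon)$ is not a priori the recovery probability of the limit strategy; one must show those vanishing contributions truly vanish while simultaneously pinning $\pi$ to an optimal solution of $\lpu$. This is exactly the content of Theorem~\ref{theo:mu_pi} together with the support-containment lemma and the best-response-alignment proposition that precede it, which I would invoke as black boxes. A secondary, purely bookkeeping point worth stating is the order of the subsequence extractions: one must first pass to a subsequence on which $\Rscr(\vpie,\sigma_\varepsilon)\to L$ and only afterwards extract the convergent subsequences supplied by those results, so that the final computed limit is genuinely $L$ rather than merely an upper bound for it.
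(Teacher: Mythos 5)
Your proposal is correct and follows essentially the same route as the paper: the lower bound via Theorem~\ref{theo:mu_pi} applied to an accumulation point of an arbitrary sequence of $\varepsilon$-SES, and the upper bound via the explicit sequence supplied by Lemma~\ref{lemma:sequence_convergent} and Corollary~\ref{cor:main_theo}. Your added bookkeeping --- first extracting the subsequence along which $\Rscr(\vpie,\sigma_\varepsilon)\to L$ and only then the convergent subsequence of strategies, and explicitly checking that the mass on signals outside $\Yscr(\pi)$ vanishes --- tightens a step the paper states more loosely (it equates $\Rscr(\pi,\sigma)$ with the $\liminf$ for a generic accumulation point $\pi$), but it is a refinement of the same argument rather than a different one.
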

\begin{proof}
Since every sequence of \ese \ strategy $\{\vpie\}$ is bounded, thus every convergent subsequence $\{\vpie\}_{\varepsilon\in S}$ must converge to an accumulation point as $\varepsilon \rightarrow 0$. If $\pi$ is an accumulation point of $\{\vpie\},$ then there exists a subsequence of $\{\vpie\}\rightarrow$ which converges to $\pi$. Recall Theorem \ref{theo:mu_pi}, which guarantees the existence of an optimal solution $\mu^*$ of $\lpu$ such that $\mu^* \equiv (\pi,\sigma)$, where  $\sigma \in \overline{D}(\pi)$. Now notice that $\sum \limits_{x \in \Xscr} \mu^*(x|x)=\sum \limits_{y \in \Yscr} \sum \limits_{x \in \Xscr} \pi(y|x)\sigma(x|y)=\Rscr(\pi,\sigma)=\liminf \limits_{\varepsilon \rightarrow 0} \Rscr(\vpie,\sigma_{\varepsilon}).$ Hence, for every sequence of \ese \ $\{\vpie\}$, $\liminf \limits_{\varepsilon \rightarrow 0} \Rscr(\vpie,\sigma_{\varepsilon})\geq \OPT(\Ibf(\ut))$. In Corollary \ref{cor:main_theo}, we proved that for every optimal solution $\mu^*$ of $\lpu$, there exists a sequence of \ese\ strategies $\{\vpie\}$ which converges to some $\pi^*\in \AS$, where $\mu^* \equiv (\pi^*,\sigma^*),$ for some $\sigma^*\in D(\pi)$. Therefore, for any optimal $\mu^*$, we have $\sum \limits_{x\in \Xscr}\mu^*(x|x)=\Rscr(\pi^*,\sigma^*)=\lim \limits_{\varepsilon \rightarrow 0}\Rscr(\vpie,\sigma_{\varepsilon})\geq \inf \limits_{\{\vpie\}} \liminf \limits_{\varepsilon \rightarrow 0} \Rscr(\vpie,\sigma_{\varepsilon})$. This gives us $\OPT(\Ibf(\ut))\geq \inf \limits_{\{\vpie\}} \liminf \limits_{\varepsilon \rightarrow 0} \Rscr(\vpie,\sigma_{\varepsilon})$. Therefore, $\Ninfo(\ut)= \OPT(\Ibf(\ut)).$\end{proof}
\subsection{Properties of informativeness}
In our previous paper \cite{deori2022information}, we found that misalignment of interest between the players does not always guarantee loss of information. We demonstrated this using the utility function in the following example.

\begin{examp}\label{eg2}
Let $\ut_2=\begin{bmatrix}
0 & 1 & -1\\
-1 & 0 & 1\\
1 & -1 & 0\\
\end{bmatrix}$ be defined on $\Xscr=\{1,2,3\}$.
Observe that $G_s(\ut_2)$ has no edge resulting in $\Iscr(\ut_2)=3.$  This example is significant since it asserts that although there was misalignment of interest between the players but the restrictions to playing only deterministic strategies ensured that it was not optimal for the sender to hide information.

We now consider the setting introduced in this paper. It is easy to check that $\mu^*$ where $\mu^*(3|1)=\mu^*(1|2)=\mu^*(2|3)=\mu(x|x)=0.5, \forall x \in \Xscr$ is the unique optimal solution of $\textbf{P}(\ut_2).$ Therefore, $\OPT(\textbf{P}(\ut_2))=1.5$ and $\OPT(\Ibf(\ut_2))=1.5=\Ninfo(\ut_2)<\Iscr(\ut_2)=3.$
\end{examp}\\

Surprisingly, the comparison of $\Ninfo(\ut_2)$ with $\Iscr(\ut_2)$ indicates that loss of information is imminent in the behavioral setting for such scenarios. To this end, we prove that only pure alignment of objectives between the players can guarantee no loss of information at equilibrium. Additionally, we also show that loss of information at equilibrium can never be greater than $q-1$, \ie $\Ninfo(\ut)\geq 1$. 
\begin{theorem}\label{theorem:extreme_info}
\begin{enumerate}
\item For any $\ut$, $\Ninfo(\ut)\geq 1$.
\item $\Ninfo(\ut)=q$ if and only if $\ut(\xhat,x)<0, \forall \xhat\neq x\in \Xscr.$
\end{enumerate}
\end{theorem}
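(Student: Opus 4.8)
The plan is to leverage Theorem~\ref{theorem:LandI}, which reduces the computation of $\Ninfo(\ut)$ to the LP $\Ibf(\ut)$, i.e. to minimizing $\sum_{x\in\Xscr}\mu^*(x|x)$ over optimal solutions $\mu^*$ of $\lpu$. For part~1, I would argue that every feasible $\mu$ for $\lpu$ already satisfies $\sum_{x}\mu(x|x)\geq 1$; this obviously then holds for the optimal ones. Fix any $\mu\in\Pscr(\Xscr|\Xscr)$ satisfying the trust constraints \eqref{eq:trustdef}. The key observation is that the trust constraints force, for every $x,\xhat$, $\mu(\xhat|\xhat)\geq\mu(\xhat|x)$. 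Sum over $\xhat$ for a fixed $x$: $\sum_{\xhat}\mu(\xhat|\xhat)\geq \sum_{\xhat}\mu(\xhat|x)=1$ since $\mu(\cdot|x)$ is a probability distribution. That immediately gives $\sum_{\xhat}\mu(\xhat|\xhat)\geq 1$, hence $\Ninfo(\ut)=\OPT(\Ibf(\ut))\geq 1$. (One should also note $\Ibf(\ut)$ is feasible — its feasible set is non-empty because $\lpu$ always has an optimal solution, e.g. by compactness of the feasible region and linearity — so the quantity is well-defined.)

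For part~2, I would prove both directions. For the ``if'' direction, suppose $\ut(\xhat,x)<0$ for all $\xhat\neq x$. Take any optimal $\mu^*$ of $\lpu$. By Proposition~\ref{prop:positive_prob_positive_values}, $\mu^*(\xhat|x)=0$ whenever $\ut(\xhat,x)<0$, which under our hypothesis means $\mu^*(\xhat|x)=0$ for all $\xhat\neq x$. Since $\mu^*(\cdot|x)$ is a probability distribution, this forces $\mu^*(x|x)=1$ for every $x$, so $\sum_x\mu^*(x|x)=q$. As this holds for every optimal solution, $\Ninfo(\ut)=q$. (Alternatively: the objective $V(\mu)$ is then nonpositive, attains $0$ at the identity $\mu(x|x)=1$, which is trivially feasible and optimal, and any optimal $\mu$ must avoid all negative-utility cells, forcing it to be the identity.)

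For the ``only if'' direction, I would argue the contrapositive: if there exist $\xhat\neq x$ with $\ut(\xhat,x)\geq 0$, then $\Ninfo(\ut)<q$. Here I want to exhibit an optimal $\mu^*$ with $\sum_x\mu^*(x|x)<q$, i.e. with $\mu^*(x|x)<1$ for at least one $x$. The natural candidate is to perturb the identity: put a small mass $\eta>0$ on the cell $(\xhat,x)$, i.e. set $\mu(\xhat|x)=\eta$, $\mu(x|x)=1-\eta$, and keep all other rows at the identity. One checks the trust constraints: we need $\mu(\xhat|\xhat)\geq\mu(\xhat|x)$, i.e. $1\geq\eta$, fine; and $\mu(x|x)\geq\mu(x|x')$ for $x'\neq x$, i.e. $1-\eta\geq 0$, fine; all other constraints are unaffected. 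The objective changes by $\eta\,\ut(\xhat,x)\geq 0$, so this $\mu$ is feasible with objective value $\geq \OPT$ when $\OPT$ is the identity's value $0$ — hence it is optimal (and if $\OPT>0$ already, one can make the same kind of perturbation starting from an existing optimal solution along a nonnegative-utility direction). Either way one obtains an optimal $\mu^*$ with $\mu^*(x|x)\leq 1-\eta<1$, so $\OPT(\Ibf(\ut))<q$, i.e. $\Ninfo(\ut)<q$.

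The main obstacle is the ``only if'' direction, specifically making the perturbation argument fully rigorous when $\OPT(\lpu)>0$: one must show that a nonnegative-utility off-diagonal cell $(\xhat,x)$ with $\ut(\xhat,x)\geq 0$ can always be used to push mass off the diagonal \emph{within} the optimal face of $\lpu$ without violating trust constraints, and that this strictly decreases $\sum_x\mu(x|x)$ in at least one optimal solution. The cleanest route is probably: start from \emph{any} optimal $\mu^*$; if already $\mu^*(x|x)<1$ for some $x$ we are done; otherwise $\mu^*$ is the identity with $\OPT=0$, and the simple perturbation above applies directly. I would double-check that the trust constraints indeed only involve comparisons with the diagonal entries of the \emph{target} symbol, so that moving mass into column $\xhat$ is always permitted as long as $\mu(\xhat|\xhat)$ stays largest in its column — which the identity-based perturbation trivially respects.
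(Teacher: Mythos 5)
Your proposal is correct and follows essentially the same route as the paper: part 1 via summing the trust constraints over $\xhat$ for a fixed $x$, the ``if'' direction of part 2 via Proposition~\ref{prop:positive_prob_positive_values} forcing every optimal $\mu^*$ to be the identity, and the ``only if'' direction by reducing to the case where the identity is optimal (so the SGV is $0$) and exhibiting a feasible competitor that shifts mass onto a nonnegative-utility off-diagonal cell. The only cosmetic difference is that you move mass $\eta$ where the paper moves mass $1$, and you phrase the last step as a contrapositive rather than a contradiction; your concluding case analysis correctly disposes of the ``$\OPT>0$'' worry.
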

\begin{proof} For part $1$ observe that if $\mu$ is any feasible solution of $\lpu$, then $\mu(\xhat|\xhat)\geq \mu(\xhat|x), \forall \xhat,x \in \Xscr \implies \sum \limits_{\xhat \in \Xscr} \mu(\xhat|\xhat)\geq \sum \limits_{\xhat \in \Xscr} \mu(\xhat|x)=1, \forall x \in \Xscr$. Hence, $\Ninfo(\ut)\geq 1$.

Next we prove part $2$. Since $\ut(\xhat,x)<0, \forall \xhat\neq x\in \Xscr$, it follows from Theorem \ref{prop:positive_prob_positive_values} that under every optimal solution $\mu^*$ of $\lpu$, if $\mu^*(\xhat|x)>0$ then $\xhat=x.$ Thus, $\mu^*(x|x)\equiv 1$. Therefore, SGV$=0$ and $\Ninfo(\ut)=\sum \limits_{x \in \Xscr} \mu^* (x|x)=q$.

Conversely, if $\Ninfo(\ut)=q$, then $\mu^*$ such that $\mu^*(x|x)=1, \forall x \in \Xscr$ is the unique solution of solution of $\Ibf(\ut)$. But every feasible solution of $\Ibf(\ut)$ is also a solution of $\lpu$ and $\mu^*$ is the only feasible solution which gives us $\sum \limits_{x\in \Xscr}\mu^*(x|x)=q$. Thus, $\mu^*$ is also a unique solution of $\lpu$ which makes SGV$=0$. Next we will show that  $\ut(\xhat,x)<0,\forall \xhat\neq x\in \Xscr$. Suppose $\ut(\xhat,x')\geq 0$ for some $\xhat \neq x'$. There exists a feasible $\mu',$ with  $\mu'(\xhat|x')=1 $ and $\mu'(x|x)=1,\forall x \in \Xscr\backslash \{x'\}$ giving $V(\mu')=\ut(\xhat,x').$ If $\ut(\xhat,x')>0$, this contradicts that SGV $=0$.
If $\ut(\xhat,x')=0$, then $\mu'$ is also an optimal solution of $\lpu$ which is a contradiction since an optimal solution must be unique. Therefore, $\Ninfo(\ut)=q$ implies that $\ut(\xhat,x)<0, \forall x\neq \xhat \in \Xscr$. This proves our theorem.
\end{proof}

 Next, we bound the informativeness with the SGV for a particular class of utility functions. We show that if a sender's utility function is such that, for every symbol, it is not indifferent between correct recovery and incorrect recovery of the symbol, then we can bound the informativeness using SGV and vice-versa.

\begin{theorem}\label{Prop:unform_SGV_Info_rel} Let $\ut$ be such that for every pair of distinct $x$ and $\xhat$ in $\Xscr$, we have $\ut(\xhat,x)\neq 0$.  Let $A(\ut):=\{(\xhat,x)\in \Xscr \times \Xscr:\ut(\xhat,x)> 0\}$, $u^+=\max \limits_{(\xhat,x)\in A(\ut)} \ut(\xhat,x)$ and $
 u^-=\min \limits_{(\xhat,x)\in A(\ut)} \ut(\xhat,x).$ Then \begin{equation}\label{eq:sgv_info}
q-\frac{\ut^*}{u^+}\geq \Ninfo(\ut)\geq q-\frac{\ut^*}{u^-}.
\end{equation}
Additionally, if $u^+=u^-=u$, then \begin{equation}\label{uni_sgv_info}
\Ninfo(\ut)=q-\frac{\ut^*}{u}.
\end{equation}
\end{theorem}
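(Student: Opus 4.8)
The plan is to read both inequalities off the two linear-programming identities already established: $\ut^*=\OPT(\lpu)$ from Theorem~\ref{theo:main}, and $\Ninfo(\ut)=\OPT(\Ibf(\ut))=\min_{\mu^*\in O^*}\sum_{x\in\Xscr}\mu^*(x|x)$ from Theorem~\ref{theorem:LandI}, where $O^*$ is the set of optimal solutions of $\lpu$. It therefore suffices to bound $\sum_{x\in\Xscr}\mu^*(x|x)$ from above and below, uniformly over all $\mu^*\in O^*$.

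First I would fix an arbitrary optimal $\mu^*$. Since $\ut(\xhat,x)\neq 0$ for every pair of distinct symbols, every off-diagonal pair either lies in $A(\ut)$ or has $\ut(\xhat,x)<0$; by Proposition~\ref{prop:positive_prob_positive_values} the latter pairs carry zero mass under $\mu^*$. Using $\ut(x,x)=0$, we then have
\[
\ut^* \;=\; V(\mu^*) \;=\; \sum_{\substack{(\xhat,x)\in A(\ut)\\ \xhat\neq x}} \mu^*(\xhat|x)\,\ut(\xhat,x),
\]
which is a sum of nonnegative terms with each coefficient in $[u^-,u^+]$. Writing $m$ for the total off-diagonal mass of $\mu^*$ and using that each row $\mu^*(\cdot|x)$ is a probability distribution, so that $m=q-\sum_{x\in\Xscr}\mu^*(x|x)$, the displayed identity yields $u^-m\le\ut^*\le u^+m$. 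Since $A(\ut)\neq\emptyset$, both $u^+$ and $u^-$ are positive (and, in passing, $\ut^*\ge u^->0$, witnessed by the feasible $\mu$ placing all of one row's mass on a positive-utility entry), so dividing gives $\ut^*/u^+\le m\le\ut^*/u^-$, i.e.
\[
q-\frac{\ut^*}{u^-}\;\le\;\sum_{x\in\Xscr}\mu^*(x|x)\;\le\;q-\frac{\ut^*}{u^+}.
\]

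Finally I would minimize over $\mu^*\in O^*$ (a nonempty set, as $\lpu$ is a feasible, bounded linear program): by Theorem~\ref{theorem:LandI} the middle term becomes $\Ninfo(\ut)$, and since the bounds hold for every optimal $\mu^*$ they descend to the minimum, giving $q-\ut^*/u^+\ge\Ninfo(\ut)\ge q-\ut^*/u^-$, which is~\eqref{eq:sgv_info}. If $u^+=u^-=u$, both bounds equal $q-\ut^*/u$, proving~\eqref{uni_sgv_info}. The one step I would treat with care --- and the crux of the argument --- is invoking Proposition~\ref{prop:positive_prob_positive_values} to confine all off-diagonal mass of $\mu^*$ to $A(\ut)$; this is precisely where the hypothesis $\ut(\xhat,x)\neq 0$ is indispensable, since mass parked on a $\ut=0$ off-diagonal pair would reduce $\sum_x\mu^*(x|x)$ without appearing in $V(\mu^*)$ and so would break the lower bound. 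Everything else is routine bookkeeping with row-sums and the bound $u^-\le\ut\le u^+$ on $A(\ut)$.
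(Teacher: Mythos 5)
Your proposal is correct and follows essentially the same route as the paper: both rest on $\ut^*=\OPT(\lpu)$, the identification $\Ninfo(\ut)=\min_{\mu^*\in O^*}\sum_x\mu^*(x|x)$, and Proposition~\ref{prop:positive_prob_positive_values} to confine off-diagonal mass to $A(\ut)$, then sandwich $V(\mu^*)$ between $u^-$ and $u^+$ times the off-diagonal mass $q-\sum_x\mu^*(x|x)$. The only cosmetic difference is that the paper applies the bound directly to the optimal solution attaining the minimum, whereas you bound all optimal solutions uniformly and then minimize; the substance is identical.
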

\begin{proof}
Consider an optimal solution $\mu^*$ of $\lpu$ such that $\Ninfo(\ut)=\sum \limits_{x\in \Xscr} \mu^*(x|x)$.
Now observe that $\ut^*=V(\mu^*)=\sum \limits_{x\in \Xscr}\sum \limits_{\xhat\neq x \in \Xscr}\mu^*(\xhat|x)\ut(\xhat,x)\geq \sum \limits_{x\in \Xscr}(1-\mu^*(x|x))u^-
\geq u^-(q-\Ninfo(\ut))$. Similarly, $u^+(q-\Ninfo(\ut))\geq \ut^*$, resulting in \eqref{eq:sgv_info}. And \eqref{uni_sgv_info} follows immediately from \eqref{eq:sgv_info} if $u^+=u^-$.
\end{proof}
Clearly $u^+=u^-$ implies that $q-\sum_{x\in \Xscr}\mu^*(x|x)$ is constant for all optimal solutions $\mu^*$ of $\lpu.$ First recall Corollary \ref{cor:main_theo}, which guarantees the existence of a convergent sequence of \ese \ approaching the limit $\pi$ as $\varepsilon \rightarrow 0$, where $\mu^*\equiv (\pi,\sigma)$ for some $\sigma \in \brr(\pi)$. Additionally,  Theorem \ref{theo:mu_pi} proves that every accumulation point of a sequence of \ese \ with $\varepsilon \rightarrow 0$ is equivalent to an optimal solution of $\lpu$.  Hence, the expected number of symbols incorrectly recovered in these accumulation points is the same for each point and exactly equal to $q-\Ninfo(\ut)$.  Therefore, the expected number of symbols incorrectly recovered in any \ese \ approaches $q-\Ninfo(\ut)$ as $\varepsilon \rightarrow 0$.
\section{Graph theoretic characterization}\label{sec5}
We introduce the notion of an \textit{obfuscation graph} in this section which helps us identify the symbols that can be grouped together in a persuasion policy. For chains, cycles and stars, we characterize $\ut^*$ and $\Ninfo(\ut)$ using this graph.
\begin{definition}
Given any $\ut$, the \textit{obfuscation graph} of $\ut$ denoted by $\Gscr(\ut)=(\Xscr,E)$ is a directed graph where $(x,x')\in E$ is a directed edge from $x$ to $x'$ if $\ut(x',x)\geq 0$. The utility $\ut(x',x)$ associated with the edge $(x,x')$ is called the \textit{weight of the directed edge} $(x,x')$ in $\Gscr.$
\end{definition}
Let $\Xscr=\{x_1,\hdots,x_q\}$. We call $(\Xscr,E)$ a \textit{chain graph} $P_q$ if $E=\{(x_i,x_{i+1})|i=1,\hdots,q-1\}$ and call it a \textit{cycle graph} denoted by $C_q=(\Xscr,E)$ if $E=\{(x_i,x_{i+1})|i=1,\hdots,q-1\}\cup\{(x_{k},x_1)\}$. Let $S_{x'}=(\Xscr,E)$ denote a \textit{directed star} with respect to node $x' \in \Xscr$ if $E=\{(x,x')| x\in \Xscr \}$.
Let $M$ denote any \textit{matching} of $\Gscr$ and $W(M)$ denote the weight of the matching $M$, where $W(M)=\sum \limits_{(x,y) \in M} \ut(y,x)$. Let $\nu(\Gscr)$ denote the \textit{weight of the matching} in $\Gscr$ with the \textit{largest weight} and let $\Mscr(\Gscr)$ denote a \textit{maximum weighted matching} where $W(\Mscr(\Gscr))=\nu(\Gscr)$. We define $\Wscr(\Gscr):=\{M|W(M)=\nu(\Gscr)\}$ and let $\underline{\nu}(\Gscr)=\min \limits_{M\in \Wscr(\Gscr)} |M|.$ For any  matching $M$ of a graph $\Gscr=(\Xscr,E)$, let $\Xscr_M$ be the set of vertices covered by $M$.

First, we prove that the SGV of any utility function is lower bound by the maximum possible weight of a matching in the corresponding obfuscation graph. 
\begin{proposition}\label{prop:matching_feasible}
 For any $\ut$,
$\nu(\Gscr(\ut))\leq \ut^*.$
\end{proposition}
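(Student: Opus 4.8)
The plan is to invoke Theorem \ref{theo:main}, which identifies $\ut^*$ with $\OPT(\lpu)$, and then to exhibit a single feasible point of $\lpu$ whose objective value is exactly $\nu(\Gscr(\ut))$. Concretely, I would fix a maximum weighted matching $M=\Mscr(\Gscr(\ut))$ of the obfuscation graph, discard any self-loops it may contain (each carries weight $\ut(x,x)=0$, so a maximum weighted matching with no self-loops still exists), and let $T\subseteq\Xscr$ be the set of \emph{tail} vertices of $M$, i.e.\ those $x$ for which $M$ contains an edge $(x,h(x))$ with $h(x)\neq x$. Since $M$ is a matching, $x\mapsto h(x)$ is a well-defined injection on $T$ whose image is disjoint from $T$: a vertex cannot be both the tail and the head of matching edges.

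Next I would define $\mu\in\Pscr(\Xscr|\Xscr)$ to be the ``merging'' distribution built from $M$: set $\mu(h(x)\,|\,x)=1$ for every $x\in T$, and $\mu(x\,|\,x)=1$ for every $x\notin T$, so that each $\mu(\cdot|x)$ is a point mass and $\mu$ is automatically a valid conditional distribution. The one step that needs care is verifying that $\mu$ satisfies the trust constraints \eqref{eq:trustdef}. Observe that $\mu(\xhat|x)$ is positive (equal to $1$) only when either $x=\xhat\notin T$, or $x\in T$ and $\xhat=h(x)$; in the first case $\mu(\xhat|\xhat)=\mu(x|x)=1$, and in the second case $\xhat=h(x)\notin T$ by the disjointness above, so again $\mu(\xhat|\xhat)=1$. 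In every case $\mu(\xhat|\xhat)\geq\mu(\xhat|x)$, so $\mu$ is feasible for $\lpu$.

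Finally I would compute the objective value: using $\ut(x,x)=0$,
\[
V(\mu)=\sum_{x,\xhat\in\Xscr}\mu(\xhat|x)\,\ut(\xhat,x)=\sum_{x\in T}\ut(h(x),x)+\sum_{x\notin T}\ut(x,x)=\sum_{(x,y)\in M}\ut(y,x)=W(M)=\nu(\Gscr(\ut)).
\]
Hence $\OPT(\lpu)\geq\nu(\Gscr(\ut))$, and Theorem \ref{theo:main} yields $\ut^*=\OPT(\lpu)\geq\nu(\Gscr(\ut))$, as claimed. The only real subtlety — and thus the ``hard part,'' though it is mild — is the bookkeeping that guarantees head vertices of $M$ are never tail vertices, since this is precisely what makes the trust constraints hold for the merging distribution; the feasibility and the value computation are then immediate.
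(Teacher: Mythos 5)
Your proposal is correct and follows essentially the same route as the paper: both construct the point-mass ``merging'' distribution $\mu$ from a maximum weighted matching (tails mapped to heads, all other vertices mapped to themselves), check feasibility for $\lpu$, and conclude via $\ut^*=\OPT(\lpu)$. The only difference is that you spell out why the trust constraints hold (heads of matching edges are never tails), a step the paper asserts without elaboration.
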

\begin{proof}
Let $M$ be a matching in $\Gscr(\ut)$. We will construct a $\mu\in \Pscr(\Xscr|\Xscr)$ which will give us $V(\mu)=W(M).$ Let $\mu(x_i|x_j)=\mu(x_i|x_i)=1$ if $(x_j,x_i)\in M$ and $\mu(x_i|x_i)=1$ if $x_i\in \Xscr\backslash \Xscr_M.$ This structure ensures that $\mu\in \Pscr(\Xscr|\Xscr)$ satisfies the trust constraints. Notice that $V(\mu)=\sum \limits_{(x_i,x_j)\in M}\ut(x_j,x_i)=W(M).$ This proves our result.
\end{proof}

Thus, graph theoretic characterization gives us a lower bound for the SGV. For computing the SGV, we need to identify the maximum possible value of the function $V(\mu)$ for every feasible $\mu$ in $\lpu$. We show that this value varies as the structure of the graph changes. Accordingly, we successfully compute the SGV for stars, chains, and cycles.
\subsection{SGV for stars, chains, and cycles}
In this section, we characterize the SGV for those utility functions whose obfuscation graphs are stars, chains, and cycles. First, we prove that if the obfuscation graph of a utility function is a star, then the SGV is equal to the sum of the weight of all the edges in the graph. For a chain $P_q$, we show that the SGV is $\nu(P_q),$ while in a cycle, the SGV is the maximum of $\nu(C_q)$ and half the sum of the weight of all the edges in the cycle.
\begin{proposition}(SGV for a star)\label{star_sgv}
If $\Gscr(\ut)=S_{x'}$, then \begin{equation}\ut^*=\sum \limits_{x\neq x'}\ut(x',x).\end{equation}
\end{proposition}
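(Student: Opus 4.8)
The plan is to prove the two inequalities $\ut^{*}\le\sum_{x\ne x'}\ut(x',x)$ and $\ut^{*}\ge\sum_{x\ne x'}\ut(x',x)$ separately, using Theorem~\ref{theo:main} to replace $\ut^{*}$ by $\OPT(\lpu)$ throughout. First I would unpack the hypothesis: $\Gscr(\ut)=S_{x'}$ means precisely that $\ut(x',x)\ge 0$ for every $x\in\Xscr$ (the edges $(x,x')$ are all present) and that no other edge is present, so that $\ut(\xhat,x)<0$ whenever $\xhat\notin\{x,x'\}$, and in particular $\ut(\xhat,x')<0$ for every $\xhat\ne x'$.

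For the lower bound I would exhibit an explicit feasible point of $\lpu$: set $\mu(x'\mid x)=1$ for all $x\in\Xscr$ and $\mu(\xhat\mid x)=0$ otherwise. Each $\mu(\cdot\mid x)$ is a probability distribution, and the trust constraints~\eqref{eq:trustdef} hold trivially, since $\mu(\xhat\mid\xhat)\ge\mu(\xhat\mid x)$ reads $1\ge 1$ when $\xhat=x'$ and $0\ge 0$ when $\xhat\ne x'$. Hence $\OPT(\lpu)\ge V(\mu)=\sum_{x\ne x'}\mu(x'\mid x)\,\ut(x',x)=\sum_{x\ne x'}\ut(x',x)$. I would note here that the matching bound of Proposition~\ref{prop:matching_feasible} is far too weak for this purpose, because any matching in a star contains at most one non-loop edge; the essential feature of the star is that every symbol can be routed to the single center $x'$ without violating trust, which is what the construction above exploits.

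For the upper bound I would take an arbitrary optimal solution $\mu^{*}$ of $\lpu$ and apply Proposition~\ref{prop:positive_prob_positive_values}. Since $\ut(\xhat,x)<0$ for every pair with $\xhat\notin\{x,x'\}$, and also for every pair $(\xhat,x')$ with $\xhat\ne x'$, we get $\mu^{*}(\xhat\mid x)=0$ for all such pairs. Consequently the objective collapses to $V(\mu^{*})=\sum_{x\ne x'}\mu^{*}(x'\mid x)\,\ut(x',x)\le\sum_{x\ne x'}\ut(x',x)$, where the inequality uses $0\le\mu^{*}(x'\mid x)\le 1$ together with $\ut(x',x)\ge 0$. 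Combining the two bounds with Theorem~\ref{theo:main} yields $\ut^{*}=\OPT(\lpu)=\sum_{x\ne x'}\ut(x',x)$.

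There is no genuine obstacle in this argument; the only points requiring care are handling the center row $x=x'$ separately in both directions (the outgoing utilities $\ut(\xhat,x')$ are strictly negative and hence contribute nothing) and recognizing that one must use the direct many-to-one construction rather than the matching lower bound.
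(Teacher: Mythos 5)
Your proof is correct and follows essentially the same route as the paper's: the paper likewise establishes the upper bound $V(\mu^*)\leq\sum_{x\neq x'}\ut(x',x)$ for any optimal $\mu^*$ and then exhibits the same feasible point $\mu'(x'|x)=1$ for all $x$ to attain it. You merely spell out the upper bound in more detail (via Proposition~\ref{prop:positive_prob_positive_values}) than the paper does, and your aside about the matching bound being too weak for a star is accurate but not needed.
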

\begin{proof}
For every optimal solution $\mu^*$ of $\lpu$, with $\Gscr(\ut)=S_{x'}$, we have \[V(\mu^*)=\sum \limits_{x\in \Xscr}\sum \limits_{\xhat\in \Xscr}\mu^*(\xhat|x)\ut(\xhat,x)\leq \sum \limits_{x\neq x'}\ut(x',x).\]
Note that a feasible solution $\mu'$ of $\lpu$, with $\mu'(x'|x)=1,\forall x\in \Xscr$, gives us $V(\mu')=\sum \limits_{x\neq x'}\ut(x',x).$ This proves our result.
\end{proof}

For any $\mu\in \Pscr(\Xscr|\Xscr)$, let
$\mu_{j|i}=\mu(x_{j}|x_i)$ and let
$u_i=\ut(x_{i+1},x_{i}), 1\leq i,j\leq q$. For ease of notation we identify $q+1\equiv 1$. Next for any chain $P_q$, let $u_q=0$. Consequently, $\lpu$ for any $\ut$ with $\Gscr(\ut)$ as a chain or a cycle is now equivalent to solving the following LP:
 \begin{align*}
       \lpu:&& \quad \max_{\mu}  & \sum \limits_{i \in \{1,\hdots,q\}}\mu_{i+1|i}u_i  \nonumber             \\
        \text{s.t.} &&\quad   \mu_{i+1|i+1} &\geq \mu_{i+1|i}, \quad \forall i\in \{1,\hdots,q\}\\
      && \mu &\in \Pscr(\Xscr|\Xscr).
    \end{align*}
    We denote the dual variables by $w_i=w(x_i)$ and $v_{i,j}=v(x_i,x_j).$ Using these variables, $\dpu$ can be written as

\begin{align*}
\dpu: && \quad \min_{w,v} & \sum \limits_{i=1}^q w_i \nonumber \\
\textrm{s.t.} && \quad  w_i-v_{i,i-1}&\geq 0, 1\leq i\leq q \\
 && w_i + v_{i+1,i}-u_i&\geq 0,1\leq i\leq q\\
 && v_{i,j}&\geq 0,\forall i\neq j, 1\leq i,j\leq q\\
 && w_i,& \text{unrestricted}, 1\leq i\leq q.
\end{align*}
  For a cycle $C_q$, for ease of notation we identify $v_{1,q}\equiv v_{q+1,q}$ and $v_{1,1-1}\equiv v_{1,q}$. In the following theorem, we prove that the SGV for a chain $P_q$ is $\nu(P_q).$
 \begin{proposition}(SGV for a chain)\label{lemma:chain_opt_value}
Let $\ut$ be a utility function defined on $\Xscr=\{x_1,\hdots, x_q \}$ such that $\Gscr(\ut)=P_q$. Then,
\begin{equation}
\ut^*=\nu(P_q).
\end{equation}
\end{proposition}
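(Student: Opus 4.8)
The plan is to combine Proposition~\ref{prop:matching_feasible}, which already gives $\nu(P_q)\le\ut^*$, with the reverse inequality $\ut^*\le\nu(P_q)$; the latter is the substance of the argument. By Theorem~\ref{theo:main} we have $\ut^*=\OPT(\lpu)$, so it is enough to bound $V(\mu)$ by $\nu(P_q)$ for every feasible $\mu$ of $\lpu$, and by Proposition~\ref{prop:positive_prob_positive_values} we may take $\mu$ to be an optimal solution $\mu^*$, which assigns zero mass to every pair $(\xhat,x)$ with $\ut(\xhat,x)<0$. Since $\Gscr(\ut)=P_q$, the only off-diagonal entries of $\mu^*$ carrying nonnegative utility are $\mu^*_{i+1|i}$, $i=1,\dots,q-1$, and the diagonal contributes nothing because $\ut(x,x)=0$; hence $V(\mu^*)=\sum_{i=1}^{q-1}\mu^*_{i+1|i}\,u_i$.

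Next I would pass to the vector $a=(a_1,\dots,a_{q-1})$ with $a_i:=\mu^*_{i+1|i}$ and read off the inequalities it inherits. Nonnegativity $a_i\ge 0$ and $a_i\le 1$ are immediate from $\mu^*\in\Pscr(\Xscr|\Xscr)$. For each internal index, the trust constraint $\mu^*_{i+1|i+1}\ge\mu^*_{i+1|i}$, together with the fact that $\mu^*_{i+1|i+1}$ and $\mu^*_{i+2|i+1}$ are two distinct entries of the probability row $\mu^*(\cdot|x_{i+1})$, gives $a_i+a_{i+1}\le \mu^*_{i+1|i+1}+\mu^*_{i+2|i+1}\le 1$ for $i=1,\dots,q-2$. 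These are exactly the degree inequalities $\sum_{e\ni x}a_e\le 1$ over the vertices of the path $P_q$, so $a$ lies in the fractional matching polytope of $P_q$.

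Finally I would use that $P_q$, being a tree, is bipartite, so its vertex--edge incidence matrix is totally unimodular and the fractional matching polytope of $P_q$ is integral, its vertices being precisely the incidence vectors of matchings of $P_q$. Maximizing the linear functional $a\mapsto\sum_i a_i u_i$ over this polytope is therefore attained at (the incidence vector of) some matching $M$, whence $V(\mu^*)=\sum_i a_i u_i\le W(M)\le\nu(P_q)$, i.e.\ $\ut^*\le\nu(P_q)$; together with Proposition~\ref{prop:matching_feasible} this gives $\ut^*=\nu(P_q)$. I expect the last step to be the main obstacle --- recognizing the projection of the feasible region onto the variables $\mu^*_{i+1|i}$ as the fractional matching polytope of $P_q$ and invoking its integrality in the bipartite case; a self-contained alternative is to exhibit an explicit feasible solution of $\dpu$ of objective value $\nu(P_q)$ supported on a minimum-weight vertex cover of $P_q$ complementary to a maximum-weight matching, but the polytope route is the shorter one.
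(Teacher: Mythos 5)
Your proposal is correct and follows essentially the same route as the paper: reduce the objective of $\lpu$ to $\sum_{i}\mu_{i+1|i}u_i$ via Proposition~\ref{prop:positive_prob_positive_values}, show the trust constraints plus row-stochasticity yield the degree constraints $\mu_{i+1|i}+\mu_{i+2|i+1}\le 1$, recognize the resulting LP as the fractional matching relaxation on $P_q$, invoke its integrality, and close the gap with Proposition~\ref{prop:matching_feasible}. The only difference is cosmetic: you justify integrality by total unimodularity of the bipartite incidence matrix, whereas the paper cites an external reference on LP relaxations of maximum-weight matching.
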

 \begin{proof}
Recall that $\ut(x,x)=0,\forall x\in \Xscr$. Therefore, $\ut^*=\sum \limits_{\xhat \neq x \in \Xscr}\mu^*(\xhat|x)\ut(\xhat,x)$, where $\mu^*$ is an optimal solution of $\lpu.$  Recall Proposition \ref{prop:positive_prob_positive_values}, where we showed that under every optimal policy $\mu^*$, if $\xhat\neq x$ then $\mu^*(\xhat|x)>0$ only if $\ut(x,\xhat)\geq 0.$ Thus, if $\Gscr(\ut)=P_q$ then $\ut^*=\sum \limits_{i=1}^{q-1}\mu^*_{i+1|i} u_i$.

Note that the trust constraints for such a $\ut$ can be equivalently written as $\mu^*_{i+1|i}+\mu^*_{i+2|i+1} \leq \mu^*_{i+1|i+1}+\mu^*_{i+2|i+1}=1, \forall i \leq q-2$. 
Consequently, $\lpu$ is now equivalent to $\max \limits_{\mu} \sum \limits_{i \in \{1,\hdots,q-1\}}\mu_{i+1|i}u_i$, where $ \mu_{i+1|i} +\mu_{i+2|i+1} \leq 1, 1\leq i \leq q-2$ and $ 0\leq \mu_{i+1|i}\leq 1,1\leq i \leq q-1.$ 
But this optimization problem is a linear relaxation of an integer program which is equivalent to finding the \textit{maximum weighted matching} in a graph. Using \cite{duan2014linear}, it is evident that an integral optimal solution must exist which is equivalent to a maximum weighted matching.
 Hence, for every feasible $\mu$, we have $\sum \limits_{i=1}^{q-1} \mu_{i+1|i} u_i \leq \nu(P_q)$. But Proposition \ref{prop:matching_feasible}, guarantees existence of a feasible $\mu'$ for which $V(\mu')=\nu(P_q).$ Consequently, $\ut^*=\nu(P_q)$.
 \end{proof}

In the following theorem, we prove that for any $\ut$, if the obfuscation graph is a cycle $C_q$, then the $\ut^*$ is the maximum of the two values: either half of the sum of the weight of all the edges in the cycle or $\nu(C_q)$.
 \begin{theorem}(SGV for a cycle)\label{theorem:Cycle:property}
  Let $\Gscr(\ut)=C_q$ be a cycle with vertex set $\Xscr=\{x_1,\hdots,x_q\}$. Then \begin{equation}
  \ut^*=  \max \{\frac{1}{2} \sum \limits_{i=1}^{q} u_i, \nu(C_q)\}.
  \end{equation}
 \end{theorem}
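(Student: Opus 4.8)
Here is the plan. The key move is to recognize that for a cycle obfuscation graph, $\lpu$ collapses to a weighted \emph{fractional matching} linear program on $C_q$, whose extreme points are half-integral.

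First I would reduce $\lpu$ to this matching LP. Since $\Gscr(\ut)=C_q$, the only ordered pairs $(\xhat,x)$ with $\ut(\xhat,x)\neq 0$ are $(x_{i+1},x_i)$, with value $u_i=\ut(x_{i+1},x_i)\geq 0$; hence for \emph{every} $\mu\in\Pscr(\Xscr|\Xscr)$ the objective of $\lpu$ is exactly $\sum_{i=1}^q \mu_{i+1|i}u_i$. Setting $a_i:=\mu_{i+1|i}$ and using that $\mu(\cdot\,|x_{i+1})$ is a probability distribution together with the trust constraint $\mu_{i+1|i+1}\geq\mu_{i+1|i}$, one gets $a_i\leq\mu_{i+1|i+1}\leq 1-\mu_{i+2|i+1}=1-a_{i+1}$, i.e. $a_i+a_{i+1}\leq 1$ and $a_i\geq 0$ (indices mod $q$). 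Conversely any such $(a_1,\dots,a_q)$ is realized by $\mu(x_i|x_i)=1-a_i,\ \mu(x_{i+1}|x_i)=a_i$ and all other entries zero, which is feasible for $\lpu$ and attains the same objective. Therefore $\ut^{*}=\max\{\sum_{i=1}^q a_iu_i: a_i+a_{i+1}\leq 1,\ a_i\geq 0\}$, the maximum-weight fractional matching value of $C_q$ with edge weights $u_i$.

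The lower bound is then immediate: Proposition~\ref{prop:matching_feasible} gives $\ut^{*}\geq\nu(C_q)$, and the feasible point $a_i\equiv\tfrac12$ gives $\ut^{*}\geq\tfrac12\sum_{i=1}^q u_i$, so $\ut^{*}\geq\max\{\tfrac12\sum_i u_i,\ \nu(C_q)\}$. For the reverse inequality I would argue that the LP attains its optimum at a vertex of the fractional matching polytope of $C_q$, and invoke the classical half-integral description of these vertices: the edges carrying weight $1$ form a matching, those carrying weight $\tfrac12$ form a vertex-disjoint union of odd cycles, and these are disjoint. In $C_q$ the only cycle subgraph is $C_q$ itself, so an optimal vertex is either the $0/1$ incidence vector of a matching of $C_q$, of weight $\leq\nu(C_q)$, or the all-$\tfrac12$ vector (possible only when $q$ is odd), of weight $\tfrac12\sum_i u_i$. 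Hence $\ut^{*}\leq\max\{\tfrac12\sum_i u_i,\ \nu(C_q)\}$, and combined with the lower bound the theorem follows. (When $q$ is even $C_q$ is bipartite, so only the matching case arises, consistent with the formula since the two alternating perfect matchings have weights summing to $\sum_i u_i$, forcing $\nu(C_q)\geq\tfrac12\sum_i u_i$.)

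The main obstacle is justifying this upper bound cleanly: either one cites the vertex structure of the fractional matching polytope, or one proves it for this special case directly. I expect the direct route to go by an uncrossing/perturbation argument: if an optimal $a^{*}$ is not $0/1$ on $C_q$, perturb it by $a_i\mapsto a_i^{*}+(-1)^i t$; for even $q$ the alternating signs are consistent around the cycle and $t$ can be pushed until some bound becomes tight, reducing to an integral vertex without decreasing the objective; for odd $q$ the signs clash at the wrap-around, so a feasible perturbation of this form exists only when every constraint $a_i+a_{i+1}\leq 1$ is tight, which forces $a^{*}\equiv\tfrac12$. Checking the remaining boundary cases is routine, but getting the case analysis watertight (especially the interaction of the $a_i=0$ coordinates with the odd-cycle obstruction) is where the real care is needed.
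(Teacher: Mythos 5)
Your proof is correct, and your upper-bound argument takes a genuinely different route from the paper's. Both proofs effectively reduce $\lpu$ to the fractional matching LP $\max\{\sum_i a_iu_i : a_i+a_{i+1}\le 1,\ a_i\ge 0\}$ on $C_q$ (the paper does this reduction explicitly for chains and reuses it for cycles), and both get the lower bound from Proposition~\ref{prop:matching_feasible} plus the all-$\tfrac12$ point. Where you diverge is the upper bound: the paper splits the optimal solutions into those with all $\mu^*_{i+1|i}>0$ versus not, handles the first case by complementary slackness with the dual $\dpu$ (summing the tight dual constraints around the cycle to force $\sum_i w_i=\tfrac12\sum_i u_i$), and reduces the second case to the chain result $\ut^*=\nu(P_q)$. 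You instead invoke the half-integral vertex description of the fractional matching polytope, observe that the only odd-cycle support available inside $C_q$ is $C_q$ itself, and read off that every vertex is either a matching or (for odd $q$) the all-$\tfrac12$ vector. Your route avoids duality altogether and makes the dichotomy in the $\max$ transparent as a statement about extreme points; the paper's route stays self-contained within its own LP machinery and reuses Proposition~\ref{lemma:chain_opt_value}. One small repair you should make: when $\Gscr(\ut)=C_q$, the off-cycle off-diagonal utilities are strictly \emph{negative}, not zero, so the objective of $\lpu$ is not literally $\sum_i\mu_{i+1|i}u_i$ for every feasible $\mu$. This does not affect the optimal value --- by Proposition~\ref{prop:positive_prob_positive_values} (or by noting that shifting such mass onto the diagonal preserves feasibility and weakly increases the objective) one may restrict to $\mu$ supported on the diagonal and the cycle edges --- but the reduction should be stated at the level of optimal values rather than pointwise.
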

\begin{proof}
We prove this result by taking two types of optimal solutions of $\lpu$. If $O^*:=\{\mu^*|\mu^* \text{is an optimal solution of} \lpu\}$, then let $A:=\{\mu^*\in O^*|\mu^*_{i+1|i} >0,\forall i \in \{1,\hdots,q\}$.  
Let $\mu^*$ be an optimal solution of $\lpu$ such that $\mu^*\in A$. From complementary slackness conditions, we get $w_i=u_i- v_{i+1,i}$ and
$w_{i}=v_{i,i-1},1\leq i\leq q.$

Accordingly, $
\sum \limits_{i=1}^q w_{i}=  \sum \limits_{i=1}^{q} (u_i- v_{i+1,i})=\sum \limits_{i=1}^{q} v_{i,i-1}$. 
 As a consequence,  \begin{equation}
\sum \limits_{i=1}^q w(x_i)= \frac{1}{2} \sum \limits_{i=1}^{q} u_i.
\end{equation}
Note that we can always construct a feasible policy $\mu'$ where $\mu'_{i|i}=\mu'_{i+1|i}=0.5,$ for $1\leq i \leq $. Accordingly, $V(\mu')=\frac{1}{2}\sum \limits_{i=1}^q u_i$. Thus, $\ut^*=V(\mu')=\frac{1}{2}\sum \limits_{i=1}^q u_i$, where $\mu'\in A$.

  Consider an optimal solution $\mu^*\in A^c$. Without loss of generality suppose $\mu^*_{1|q}=0$. Thus, $\mu^*_{q|q}=1$ and $V(\mu^*)=\sum \limits_{i=1}^{q-1}\mu^*_{i+1|i}u_i$. But notice that $\mu^*$ is also an optimal solution for the chain $P_q$. Thus, $\nu(P_q)=V(\mu^*)=\ut^*$. Now every matching in $P_q$ is a matching in $C_q$. But Proposition \ref{prop:matching_feasible} implies that $\nu(C_q)\leq \ut^*.$ Accordingly, we get $\nu(C_q)=\ut^*.$ 
Therefore, we can conclude that the optimal value is $\max \{\frac{1}{2} \sum \limits_{i=1}^k u_i, \nu(C_q)\}.$
\end{proof}

If the weight of an edge is positive and uniform, then the weight of the largest matching is $\nu(P_q)$ in $P_q$ and $\nu(C_q)$ in $C_q$. Consequently, we get the following result.
\begin{corollary}\label{cor:cycle_uni}
\begin{enumerate}
\item Let $u>0$ and let $\ut$ be such that $\Gscr(\ut)=C_q$, where $u_i=u,$ for $1\leq i\leq q$.  Then, \[
\ut^*=\frac{qu}{2}.\]
\item Let $u>0$ and let $\ut$ be such that $\Gscr(\ut)=P_q$, where $u_i=u,$ for $1\leq i\leq q-1$.  Then, \[\ut^*=\left\{
\begin{array}{ll}
      \frac{(q-1)u}{2}& \text{if} q \text{is odd}\\
     \frac{qu}{2}& \text{if} q \text{is even}.
\end{array}
\right.\]
\end{enumerate}
\end{corollary}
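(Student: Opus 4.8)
The plan is to read off both identities directly from the closed-form expressions already obtained for obfuscation graphs that are chains and cycles, namely Proposition~\ref{lemma:chain_opt_value} and Theorem~\ref{theorem:Cycle:property}. The one structural fact doing the work is the following: when every edge of the obfuscation graph carries the same weight $u>0$, a maximum weighted matching is exactly a maximum \emph{cardinality} matching, since adjoining any further admissible edge to a matching raises its weight by precisely $u>0$. Hence, writing $m(\Gscr)$ for the size of a largest matching of $\Gscr$, we have $\nu(\Gscr)=u\,m(\Gscr)$, and by the standard description of maximum matchings in paths and cycles, $m(P_q)=m(C_q)=\lfloor q/2\rfloor$.

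For the cycle statement, I would substitute $u_i=u$ into Theorem~\ref{theorem:Cycle:property}. The averaging term becomes $\tfrac12\sum_{i=1}^{q} u_i=\tfrac{qu}{2}$, while the matching term becomes $\nu(C_q)=u\lfloor q/2\rfloor\le\tfrac{qu}{2}$; the maximum of the two is therefore $\tfrac{qu}{2}$, which gives $\ut^*=\tfrac{qu}{2}$ irrespective of the parity of $q$.

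For the chain statement, Proposition~\ref{lemma:chain_opt_value} already gives $\ut^*=\nu(P_q)$, so it only remains to evaluate $\nu(P_q)=u\lfloor q/2\rfloor$: this equals $\tfrac{(q-1)u}{2}$ when $q$ is odd and $\tfrac{qu}{2}$ when $q$ is even, which is exactly the claimed case split. As a consistency check, the matching realizing $\nu(P_q)$ is precisely the one used to build the feasible $\mu$ in Proposition~\ref{prop:matching_feasible}, so attainability of this value is already in hand.

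There is no genuine obstacle here; the whole argument is a specialization of the cited results. The only points deserving a little care are the reduction of ``maximum weighted matching'' to ``maximum cardinality matching'' under uniform positive weights, and the bookkeeping of $\lfloor q/2\rfloor$ for odd versus even $q$ in the chain case; everything else is immediate.
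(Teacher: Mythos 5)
Your proposal is correct and follows essentially the same route as the paper: both specialize Proposition~\ref{lemma:chain_opt_value} and Theorem~\ref{theorem:Cycle:property} to uniform weights and evaluate $\nu(P_q)$ and $\nu(C_q)$ via the maximum matching sizes $\lfloor q/2\rfloor$. Your explicit remark that uniform positive weights reduce maximum weighted matching to maximum cardinality matching is a small clarification the paper leaves implicit, but the argument is the same.
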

\begin{proof}
In an odd cycle, the largest matching will have $\frac{q-1}{2}$ edges and while an even cycle will have $\frac{q}{2}$ edges in the largest matching. Therefore, $\nu(C_q)=\frac{(q-1)u}{2}$ if $q$ is odd and $\nu(C_q)=\frac{qu}{2}$ if $q$ is even. 
Therefore, $\nu(C_q)\leq \frac{qu}{2},\forall q$ and $\ut^*=\frac{qu}{2}$ follows from the previous theorem. This proves part $1$ of the corollary.

Since the weight of the edges in the chain $P_q$ are uniform, $\nu(P_q) = \frac{(q-1)u}{2}$ for $q$ odd and $\nu(P_q)= \frac{qu}{2}$ for $q$ even.
Since $\ut^*=\nu(P_q)$ and this proves part $2$ of the corollary.
\end{proof}
\subsection{Informativeness in stars, chains and cycles}
Although, we have shown that loss of information is imminent in the behavioral setting in the presence of non-negative utility values, we cannot claim that loss of information is more in the deterministic setting when compared with the behavioral setting. We prove this by bounding $\Ninfo(\ut)$ with $\Iscr(\ut)$ for different obfuscation graphs. 
\begin{proposition}
\begin{enumerate}
\item Let $\Gscr(\ut)$ be a directed and complete graph, \ie, $\ut(x,x')\geq 0,\forall x,x'\in \Xscr$. Then \[\Ninfo(\ut)\geq \Iscr(\ut)=1.\]
\item If $\Gscr(\ut)=S_{x'},x'\in \Xscr$ then $1=\Ninfo(\ut)<\Iscr(\ut)=q.$
\item Let $u>0$ and $\Gscr(\ut)=C_q$, with $u_i=u$ for $1\leq i\leq q$. Then, \[
\frac{q}{2}=\Ninfo(\ut)<\Iscr(\ut)=q.\]
\item  Let $u>0$ and $\Gscr(\ut)=P_q$, with $u_i=u$, for $1\leq i\leq q-1$. Then, \[
\Ninfo(\ut)=\left\{
\begin{array}{ll}
      \frac{(q+1)}{2}& \text{if} q \text{is odd}\\
     \frac{q}{2}& \text{if} q \text{is even}
\end{array}
\right. \]
and  $\Ninfo(\ut)<\Iscr(\ut)=q.$
\end{enumerate}
\end{proposition}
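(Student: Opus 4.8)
The plan is to handle all four items uniformly: in each case I compute $\Iscr(\ut)$ from the strong sender graph $G_s(\ut)$ of~\cite{deori2022information}, and I compute $\Ninfo(\ut)$ via Theorem~\ref{theorem:LandI}, i.e.\ as $\OPT(\Ibf(\ut))$, leaning on the SGV formulas already proved (Propositions~\ref{star_sgv} and~\ref{lemma:chain_opt_value}, Theorem~\ref{theorem:Cycle:property}, Corollary~\ref{cor:cycle_uni}) together with Theorems~\ref{theorem:extreme_info} and~\ref{Prop:unform_SGV_Info_rel}. The comparisons $\Ninfo(\ut)$ versus $\Iscr(\ut)$ then drop out by elementary arithmetic.

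For the $\Iscr$ side I would first record a structural observation common to items~2--4: the obfuscation graph contains at most one of the two oriented edges between any pair of distinct symbols. Indeed, for $\Gscr(\ut)=S_{x'}$ every edge has the form $(x,x')$, while for $\Gscr(\ut)=C_q$ (with $q\ge 3$) and for $\Gscr(\ut)=P_q$ every edge is a forward edge $(x_i,x_{i+1})$; in none of these edge sets do both $(x,\xhat)$ and $(\xhat,x)$ occur for $x\neq \xhat$. Since $(x,\xhat)\in E(\Gscr(\ut))$ iff $\ut(\xhat,x)\ge 0$, this means that for distinct $x,\xhat$ one never has $\ut(\xhat,x)\ge 0$ \emph{and} $\ut(x,\xhat)\ge 0$; hence $G_s(\ut)$ is edgeless and $\Iscr(\ut)=\theta_v(G_s(\ut))=q$, giving the $\Iscr$ values in items~2--4. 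For item~1, $\ut(x,\xhat)\ge 0$ for all $x,\xhat$ makes $G_s(\ut)$ the complete graph, so $\Iscr(\ut)=\theta_v(K_q)=1$.

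For the $\Ninfo$ side: item~1 is immediate from Theorem~\ref{theorem:extreme_info}(1), which yields $\Ninfo(\ut)\ge 1=\Iscr(\ut)$. For item~2 I would reuse the feasible $\mu'$ from the proof of Proposition~\ref{star_sgv}, namely $\mu'(x'|x)=1$ for all $x$; it is feasible for $\lpu$, it attains $V(\mu')=\sum_{x\neq x'}\ut(x',x)=\ut^*$ hence is optimal, and $\sum_{x}\mu'(x|x)=\mu'(x'|x')=1$, so $\OPT(\Ibf(\ut))\le 1$, while Theorem~\ref{theorem:extreme_info}(1) gives the reverse inequality; thus $\Ninfo(\ut)=1<q=\Iscr(\ut)$. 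For items~3 and~4, all distinct-pair utilities are nonzero (edges carry weight $u>0$, non-edges carry strictly negative values) and $u^+=u^-=u$, so Theorem~\ref{Prop:unform_SGV_Info_rel} applies and gives $\Ninfo(\ut)=q-\ut^*/u$; substituting $\ut^*=qu/2$ for $C_q$ (Corollary~\ref{cor:cycle_uni}(1)) yields $\Ninfo(\ut)=q/2$, and substituting $\ut^*=(q-1)u/2$ for odd $q$ and $\ut^*=qu/2$ for even $q$ for $P_q$ (Corollary~\ref{cor:cycle_uni}(2)) yields $(q+1)/2$ and $q/2$ respectively. The strict inequalities $q/2<q$ and $(q+1)/2<q$ hold in these regimes since $q\ge 3$.

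The heavy lifting is done by the already-established SGV characterizations and by Theorems~\ref{theorem:extreme_info} and~\ref{Prop:unform_SGV_Info_rel}, so the genuinely new work is light. The two places needing care are: (i) item~2, where one must both exhibit an optimal solution of $\lpu$ with diagonal sum exactly $1$ and invoke the universal bound $\Ninfo(\ut)\ge 1$ to conclude optimality of that value; and (ii) the bookkeeping showing $G_s(\ut)$ is edgeless for the star, the cycle and the chain — equivalently, that the ``reverse'' edges always carry strictly negative weight — which is the step I expect to absorb most of the (short) argument.
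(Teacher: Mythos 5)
Your proposal is correct and follows essentially the same route as the paper: compute $\Iscr(\ut)$ by checking that $G_s(\ut)$ is complete (item 1) or edgeless (items 2--4), get $\Ninfo(\ut)\ge 1$ from Theorem~\ref{theorem:extreme_info} for item 1, reuse the star-optimal $\mu'$ for item 2, and apply Theorem~\ref{Prop:unform_SGV_Info_rel} with Corollary~\ref{cor:cycle_uni} for items 3 and 4. The only cosmetic difference is that for the cycle the paper supplements the bound with an explicit optimal $\mu^*$ having diagonal sum $q/2$, whereas you invoke the equality case $u^+=u^-=u$ directly (as the paper itself does for the chain), and you are somewhat more careful in verifying that all off-diagonal utilities are nonzero and that the reverse edges carry strictly negative weight.
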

\begin{proof}\begin{enumerate}
\item Note that since $\ut(x,x')\geq 0,\forall x,x'\in \Xscr$, the sender graph $G_s(\ut)$ is a \textit{clique}. Accordingly, $\Iscr(\ut)=1\leq \Ninfo(\ut).$
\item Consider the optimal $\mu'$ constructed in the proof of Proposition \ref{star_sgv}. Observe that $\sum \limits_{x\in \Xscr}\mu'(x|x)=1$. Thus, using Theorem \ref{theorem:extreme_info}, we conclude that $\Ninfo(\ut)=1$. Next note that if $\Gscr(\ut)=S_x'$, then $G_s(\ut)$ has no edges. Therefore, $\Iscr(\ut)=q>\Ninfo(\ut).$
\item Theorem \ref{Prop:unform_SGV_Info_rel} and Corollary \ref{cor:cycle_uni} give us $\Ninfo(\ut)\geq\frac{q}{2}.$ But $\mu^*$ with $\mu^*_{i+1|i}=\mu^*_{i|i}=0.5,1\leq i\leq q$ is an optimal solution of $\lpu$ satisfying $\sum \limits_{i=1}^q\mu^*_{i|i}=\frac{q}{2}.$ Therefore, $\Ninfo(\ut)=\frac{q}{2}.$ Since $\Gscr(\ut)=C_q$, $G_s(\ut)$ has no edge. This makes $\Iscr(\ut)=q$ which proves our result.
\item Using Theorem \ref{Prop:unform_SGV_Info_rel} and Corollary \ref{cor:cycle_uni}, the result follows immediately as $\Ninfo(\ut)=q-\frac{\ut^*}{u}$.
\end{enumerate}\end{proof}
\section{Conclusion}\label{sec6}
In any persuasion setting, it is natural to ask, what makes the receiver trust the sender's suggestion? We addressed this question in our paper. We characterized the equilibrium strategies of the sender through a linear program with trust constraints. These constraints ensured that every feasible persuasion policy reveals enough information to persuade the receiver to pick a particular action. We found that revealing some true information is mandatory and quantified the minimum amount of information that needs to be revealed in any equilibrium using another linear program.
\section*{Acknowledgements}
This research was supported by the grant
CRG/2019/002975
of the Science and Engineering Research Board, Department of Science and Technology, India. The authors also acknowledge the support of the Trust Lab at IIT Bombay.
\bibliographystyle{unsrt}
\bibliography{new_ref} 
\end{document}